\theoremstyle{plain}%
\newtheorem{Theorem}{Theorem}[section] %
\newtheorem{Lemma}[Theorem]{Lemma}
\newtheorem{Proposition}[Theorem]{Proposition} %
\theoremstyle{definition}%
\newtheorem{Assumption}[Theorem]{Assumption}%
\newtheorem{Example}[Theorem]{Example} %
\theoremstyle{remark}%
\newtheorem{Remark}[Theorem]{Remark} %
\newcommand{\set}{\triangleq}
\newcommand{\envspace}{\vspace{2mm}}
\renewcommand{\mathcal}{\mathscr}
\newcommand{\PP}{\mathbb{P}}
\renewcommand{\epsilon}{\varepsilon}
\newcommand{\Dom}{\operatorname{\mathrm{Dom}}}
\renewcommand{\P}{\PP}
\newcommand{\Pas}{\text{$\P$--a.s.}}
\numberwithin{equation}{section}
\begin{document}

\title{Utility maximization in the large markets}
\author{Oleksii Mostovyi\\
  The University of Texas at Austin,\\
  Department of Mathematics,\\
  Austin, TX 78712-0257 \\
  (mostovyi@math.utexas.edu)} \date{\today}


\maketitle
\begin{abstract}
In the large financial market, which is described by a model with { countably} many traded assets,
we formulate the problem of the expected utility maximization. Assuming that the preferences of an economic agent are modeled with
a stochastic utility 
and that the consumption occurs according to a stochastic clock, we obtain the ``usual'' conclusions of the 
utility maximization theory. 
We also give a characterization of the value function in the large market in terms of a sequence of the value functions in the finite-dimensional~models.

\vspace{2mm}
{\scshape{Key Words}}: \small{{utility maximization}, {large markets}, {incomplete markets}, {convex duality}, {optimal investment}, {stochastic clock}} 
 \end{abstract}

\let\thefootnote\relax\footnote{
The author would like to thank Dmitry Kramkov, Mihai S\^irbu, and Gordan \v{Z}itkovi\'{c}
for the discussions on the topics of the paper.
This work is supported by the National Science Foundation under Grant No. DMS-0955614, PI Gordan \v{Z}itkovi\'{c}.
}

\section{Introduction}

In the mathematical finance literature, the notion of the large security market was introduced by \cite{KabKram1}
as a sequence of probability spaces with  the corresponding time horizons and the  semimartingales representing the traded assets. 
Investigation of the no-arbitrage conditions in the large market settings has naturally attracted the 
attention of the research community and is done in~\cite{KabKram2, Klein2000, Klein2003, Klein2006, KleinSchach1, KleinSchach2}, whereas 
the  questions related to completeness are considered in~\cite{
BjorkDiMasiKabRung, BjorkKabRung, DeDonno, DeDonnoPratelli2004, Taflin2005}.

In contrast to~\cite{KabKram1, KabKram2}, 
\cite{BjorkNaslund} 
assumed that a large market consists of one probability space, but the number of traded assets is countable, and among other contributions developed the arbitrage pricing theory results in such settings. 
Note that the models with countably many assets embrace the ones with the stochastic dimension of the stock price process (considered e.g. in~\cite{Strong}).  
\cite{DeDonnoGuasoniPratelli} extended the formulation in~\cite{BjorkNaslund}
 to a model driven by a sequence of semimartingales and established the standard conclusions of the theory for the utility maximization from terminal wealth problem as well as obtained  the dual characterization of the superreplicable claims. Their results are based on the notion of a stochastic integral with respect to a sequence of semimartingales from \cite{DeDonnoPratelli}. 
The Merton portfolio problem in the settings with infinitely many traded zero-coupon bonds is investigated in~\cite{EkelandTaflin2005, TehranchiRinger}.
Other applications of the large market models in the analysis of the fixed income securities are considered in~\cite{BjorkDiMasiKabRung, BjorkKabRung, CarmonaTehranchi,  CarmonaTehranchiBook, DeDonnoPratelli2004, Taflin2005}.



 We consider a market with countably many traded assets driven by a sequence of a semimartingales (as in~\cite{DeDonnoGuasoniPratelli}).
 In such settings, we formulate Merton's portfolio problem for a rational economic agent whose preferences are specified via a stochastic utility of Inada's type defined on the positive real line and whose consumption follows a stochastic clock. 
We establish the standard existence and uniqueness results for the primal and dual optimization problems under the condition of finiteness of both primal and dual value functions.
 We also  characterize the primal and dual value functions in terms of the appropriate limits of the sequences of the value functions in the finite-dimensional models.
In particular, we extend the utility maximization results in~\cite{DeDonnoGuasoniPratelli} by adding the intermediate consumption and assuming randomness
 of the agent's preferences.

The proof of our results hinges on the dual characterization of the admissible consumption processes given in Proposition~\ref{dualCharacterization}, which  allows to link the present model with the abstract theorems of~\cite{Mostovyi2011}.
Note that our formulation of the admissible consumptions and trading strategies relies on the notion of 
the stochastic integral with respect to a sequence of semimartingales in the sense of~\cite{DeDonnoPratelli}. 

We believe that our results provide a convenient set of conditions for analyzing other problems in the settings of the large markets with or without the
presence of the intermediate consumption, such as robust utility maximization, optimal investment with random endowment, utility-based pricing, and existence of equilibria.

The remainder of the paper is organized as follows. Section~\ref{sectionModel} contains the model formulation and the main results, 
which are formulated in Theorem~\ref{mainTheorem} and Lemma~\ref{secondMain}. Their proofs are given in section~\ref{sectionProofs}.

\section{The model and the main result}\label{sectionModel}
We consider a filtered probability space $\left(\Omega, \mathcal F, (\mathcal F_t)_{t\in[0,T]}, \mathbb P \right)$, where the filtration 
$(\mathcal F_t)_{t\in[0,T]}$ satisfies the usual conditions, $\mathcal F_0$ is the
completion of the trivial $\sigma$-algebra.
As in~\cite{BjorkNaslund, DeDonnoGuasoniPratelli}, we assume that there is one fixed market which consists of a riskless bond and a sequence of semimartingales $S = (S^n)_{n\geq 1} = \left((S_t^i)_{t\in[0,T]}\right)_{i= 1}^{\infty}$ that describes the evolution of the stocks. The price of the bond is supposed to be equal to $1$ at all times. 


The notion of a strategy on the large market relies on the finite-dimensional counterparts, whose definitions we specify first.
For $n\in\mathbb N$, an {\it $n$-elementary strategy} is an $\mathbb R^n$-valued, predictable process,which is integrable with respect to $(S^i)_{i\leq n}$. 
An {\it elementary strategy}  is a strategy which is $n$-elementary for some $n$.  
For $x\geq 0$, an $n$-elementary strategy $H$ is {\it $x$-admissible} if 
$H\cdot S = \sum\limits_{i\leq n}H^i\cdot S^i$ is uniformly bounded from below by the constant $-x$~\Pas 
~Let $\mathcal H^n$ denote the set of $n$-elementary strategies that are also $x$-admissible for some~$x\geq 0$.

In the present settings specification of the admissible wealth processes and trading strategies
 is based on integration with respect to a sequence of semimartingales in the sense of~\cite{DeDonnoPratelli}. 
Thus we recall several definitions from~\cite{DeDonnoPratelli},
upon which the formulation of the set of admissible consumptions is based. The reader that is familiar with this construction might proceed to 
definition of an $x$-admissible generalized strategy. 
 Recall that $\mathbb {R^N}$ is the space of all real sequences.
An {\it unbounded functional} on $\mathbb {R^N}$ is a linear functional $F$, whose domain ${\rm Dom}(F)$ is a subspace of $\mathbb {R^N}$.
A {\it simple integrand} is a finite {sum} of bounded predictable processes of the form $\sum\limits_{i\leq n}h^ie^i$, where $(e^i)$ is the 
canonical basis for $\mathbb {R^N}$ and $h^i$'s are one-dimensional bounded and predictable processes.

 A process $H$ with values in the set of unbounded functionals on $\mathbb {R^N}$ is {\it predictable} if there exists a sequence of simple integrands $(H^n)$, such that
 $
  H = \lim\limits_{n\to\infty}H^n~ \Pas,
 $
which means that $x\in\Dom(H)$ if the sequence $(H^n)$ converges and $\lim\limits_{n\to\infty}H^n(x) = H(x)$.

 A predictable process $H$ with values in the set of unbounded functionals on $\mathbb {R^N}$ is {\it integrable} with respect to $S$ if 
 there exists a sequence $(H^n)$ of simple integrands, such that $(H^n)$  converges to $H$ and the sequence of semimartingales $(H^n\cdot S)$ 
 converges to a semimartingale $Y$ in the semimartingale topology. In this case, we define the stochastic integral $H\cdot S$ to be $Y$.


 For every  $x\geq 0$, a process $H$ is an {\it $x$-admissible generalized strategy} if $H$ is integrable with respect to the semimartingale $S$ 
and there exists an approximating sequence $(H^n)$ of
$x$-admissible elementary strategies, such that $(H^n\cdot S)$ converges to $H\cdot S$ in the semimartingale topology. Note that this is Definition~2.5 
from~\cite{DeDonnoGuasoniPratelli}.

Let us define a {\it portfolio} $\Pi$ as a triple $(x, H, c),$ where the constant 
$x$ is an initial value, 
$H$ is a predictable and {admissible} $S$-integrable process (with the values in the set of unbounded functionals on $\mathbb {R^N}$)   
specifying the amount of each asset held in the portfolio, 
and
$c=\left(c_t\right)_{t\in[0,T]}$ is a nonnegative and optional process that
specifies the consumption rate in the units of the bond.

Hereafter we fix a \textit{stochastic clock} $\kappa =
\left(\kappa_t\right)_{t\in[0,T]}$, which is a
non-decreasing, c\`adl\`ag, adapted process such that
\begin{equation}
  \label{stochasticClock}
\kappa_0 = 0, ~~ \mathbb P\left[\kappa_{T}>0 \right]>0,\text{ and } \kappa_{T}\leq A
\end{equation}
for some  finite constant $A$. Stochastic clock represents the notion of time
according to which consumption occurs. {Note that, in view of the utility maximization problem (\ref{primalProblem}) defined below,
we will only consider consumption processes that are absolutely continuous with respect to $d\kappa$, i.e. of the
form $c\cdot \kappa$, since the other consumptions 
are suboptimal.}

We will use the following notation: for  arbitrary constants $x$ and $y$  and processes $X$ and $Y$, $(x+ y XY)$ denotes the process
$(x + y X_tY_t)_{t\in[0,T]}$. For a portfolio $(x, H, c)$, we define the {\it wealth process} as 
\begin{equation}\nonumber
    X = x + H\cdot S - c\cdot \kappa.
\end{equation}
Note that the closure of the sets of wealth processes  in  the semimartingale topology  is investigated in~\cite{DeDonnoGuasoniPratelli, Kardaras_emery} (with the corresponding definitions of a wealth process being different from the one here).
For $x\geq 0$, we define the set of $x$-admissible consumptions as 
\begin{equation}\nonumber
\begin{array}{rcl}
 \mathcal A(x) &\triangleq& \left\{~c\geq 0: c{\text~is~optional,~and ~there~exists} \right.\\
 &&~~{\text  ~an~}x{\text -admissible~generalized~strategy}~H,\\
 &&\left.~~{\text ~s.t.~}x + H\cdot S - c\cdot\kappa\geq 0\right\}.\\
\end{array}
\end{equation}
Thus a constant strictly positive consumption
$c^{*}_t \triangleq x/A, ~t\in[0,T]$, belongs to $\mathcal A(x)$ for every $x>0$. 



For $n\geq 1$, let $\mathcal Z^n$ denote the set of c\`adl\`ag densities of equivalent martingale measure for $n$-elementary strategies, i.e.
\begin{displaymath}
\begin{array}{rcl}
\mathcal{Z}^n&\set&
  \left\{Z>0:~ Z {\rm~is~ a~ c\grave{a}dl\grave{a}g~martingale,~s.t.~}Z_0=1~{\rm and}\right.\\
&&\left. ~~(1 + H\cdot S)Z{\rm~is~a~local~ martingale~for~every~}H\in\mathcal H^n, \right.\\
&&\left.{~~H~is~1-admissible}\right\}.
\end{array}
\end{displaymath}
 Note that 
$\mathcal{Z}^{n+1} \subseteq \mathcal{Z}^n $, $n\geq 1$. We also define
\begin{displaymath}
 \mathcal Z \triangleq \bigcap\limits_{n\geq 1}\mathcal Z^n,
\end{displaymath}
and assume that
\begin{equation}\label{ZisNotEmpty}
  \mathcal{Z} \neq \emptyset,
\end{equation}
which coincides with the no-arbitrage condition in~\cite{DeDonnoGuasoniPratelli}. 

The preferences of an economic agent are modeled via a stochastic utility $U:[0,T]\times\Omega\times [0,\infty)\to \mathbb R\cup\{-\infty\}$ that satisfies the conditions below.
\begin{Assumption}
  \label{Assumption1}
  For every $(t, \omega)\in[0, T]\times\Omega$ the function $x\to
  U(t, \omega, x)$ is strictly concave, increasing, continuously
  differentiable on $(0,\infty)$ and satisfies the Inada conditions:
  \begin{equation}\nonumber
    \lim\limits_{x\downarrow 0}U'(t, \omega, x) =
    +\infty \quad \text{and} \quad \lim\limits_{x\to
      \infty}U'(t, \omega, x) \set 0,
  \end{equation}
  where $U'$ denotes the partial derivative with respect to the third argument.
 At $x=0$ we suppose, by continuity, $U(t, \omega, 0) = \lim\limits_{x \downarrow 0}U(t, \omega, x)$, this value
 may be $-\infty$.
For every
  $x\geq 0$ the stochastic process $U\left( \cdot, \cdot, x \right)$ is
  optional.
\end{Assumption}
{\looseness +1}The conditions on $U$ coincide with the ones in \cite{Mostovyi2011} (on the finite time horizon). For simplicity of notations for a nonnegative optional  process $c$, the processes with trajectories   
$\left(U(t,\omega, c_t(\omega))\right)_{t\in[0,T]}$, 
 $\left(U'(t,\omega, c_t(\omega))\right)_{t\in[0,T]}$, and $\left(U^{-}(t,\omega, c_t(\omega))\right)_{t\in[0,T]}$ (where $U^{-}$ designates the negative part of $U$)
will be denoted by $U(c)$, $U'(c)$, and $U^{-}(c)$ respectively.

For a given initial capital $x>0$ the goal of the agent is to maximize
his expected utility. The value function of this problem is denoted by
\begin{equation}\label{primalProblem}
  u(x) \set \sup\limits_{c\in\mathcal{A}(x)} \mathbb{E}\left[
    U(c)\cdot\kappa_T
  \right],\quad x>0.
\end{equation}
We use the convention
\begin{equation}\label{wellDefined_u}
  \mathbb{E}\left[
     U(c)\cdot\kappa_T\right] \set -\infty
  \quad \text{if} \quad \mathbb{E}\left[ 
   U^{-}(c)\cdot\kappa_T
  \right]= +\infty.
\end{equation}

To study~\eqref{primalProblem} we employ standard duality arguments as
in \cite{KS} and \cite{Zitkovic} and define the \textit{conjugate stochastic
field} $V$ to $U$ as
\begin{equation}
  \nonumber
  V(t,\omega, y) \set \sup\limits_{x>0}\left( U(t,\omega, x) - xy \right)
  ,\quad
  \left(t, \omega, y \right) \in[0, T]\times\Omega\times[0,
  \infty).
\end{equation}
It is well-known that $-V$ satisfies Assumption~\ref{Assumption1}. For $y\geq 0$, we
also denote
\begin{equation}\nonumber
\begin{array}{c}
\hspace{-15mm}{\mathcal Y}(y) \set {\rm cl}\left\{Y: Y{\rm~is~c\grave adl\grave ag~adapted~and }
\right.\\
\hspace{35mm}\left. 0\leq Y\leq yZ ~\left(d\kappa\times\mathbb
    P\right){\rm~a.e.~for~some~}Z\in{\mathcal Z} \right\},
\end{array}
\end{equation}
\looseness+1 where the closure is taken in the topology of convergence in measure $\left(d\kappa\times\mathbb
    P\right)$ on the space of finite-valued optional processes. We will denote this space $\mathbb L^0\left(d\kappa\times\mathbb
    P\right)$ or $\mathbb L^0$ for brevity. 

{\looseness+1} 
Similarly to composition of $U$ with $c$, for a nonnegative optional  process $Y$, the stochastic processes, whose realizations are 
$\left(V(t,\omega, Y_t(\omega))\right)_{t\in[0,T]}$ and  $\left(V^{+}(t,\omega, Y_t(\omega))\right)_{t\in[0,T]}$ (where $V^{+}$ is the positive part of $V$),
will be denoted by $V(Y)$ and $V^{+}(Y)$ respectively.
After these preparations, we define the value function of the dual optimization
problem~as
\begin{equation}\label{dualProblem}
  v(y) \set \inf\limits_{Y\in{\mathcal Y}(y)} \mathbb{E}\left[
     V(Y)\cdot\kappa_T
  \right],\quad y>0,
\end{equation}
where we use the convention:
\begin{equation}\label{wellDefined_v}
  \mathbb{E}\left[ 
        V(Y)\cdot\kappa_T
   \right] \set +\infty
  \quad \text{if} \quad 
   \mathbb{E}\left[ 
       V^{+}(Y)\cdot\kappa_T
  \right] = +\infty.
\end{equation} 

The following theorem constitutes the main contribution of the present article.
\begin{Theorem}\label{mainTheorem}
  Assume that conditions~(\ref{stochasticClock}) and (\ref{ZisNotEmpty}) and
  Assumption \ref{Assumption1} hold true and suppose
  \begin{equation}\nonumber
    v(y)<\infty~~ for~ all~y>0~~~ and~~~
    u(x) > -\infty~~ for~ all~x>0.
  \end{equation}
  Then we have:
  \begin{enumerate}
  \item $u(x)< \infty$ for all $x>0,$ $v(y)>-\infty$ for all $y>0.$
    The functions $u$ and $v$ are conjugate, i.e.,
    \begin{equation}\nonumber
      \begin{array}{rcl}
        v(y) &=& \sup\limits_{x>0}\left(u(x) - xy\right),\quad y>0,\\
        u(x) &=& \inf\limits_{y>0}\left(v(y) + xy\right),\quad x>0.\\
      \end{array}
    \end{equation}
    The functions $u$ and $-v$ are continuously differentiable on $(0,
    \infty),$ strictly increasing, strictly concave and satisfy the
    Inada conditions:
    \begin{equation}\nonumber
      \begin{array}{lcr}
        u'(0) \set \lim\limits_{x\downarrow 0}u'(x) = +\infty, && -v'(0)
        \set \lim\limits_{y\downarrow 0}-v'(y) = +\infty,\\
        u'(\infty) \set \lim\limits_{x\to\infty}u'(x) = 0, && -v'(\infty) \set
        \lim\limits_{y\to\infty}-v'(y) = 0.\\
      \end{array}
    \end{equation}

  \item For every $x>0$ and $y>0$ the optimal solutions $\hat{c}(x)$ to
    (\ref{primalProblem}) and $\hat{Y}(y)$ to (\ref{dualProblem})
    exist and are unique. Moreover, if $y=u'(x)$ we have the dual
    relations
    \begin{displaymath}
       \hat Y(y) = U'(\hat c(x)),\quad (d\kappa\times \mathbb P)~a.e.
        \end{displaymath}
    and
    \begin{displaymath}
      \mathbb{E}\left[ 
          \left((\hat{c}(x)\hat{Y}(y))\cdot \kappa\right)_T
        \right]=xy.
    \end{displaymath}

\item We have,
\begin{displaymath}
\begin{array}{rclc}
      v(y)& =& \inf\limits_{Z\in\mathcal{Z}}\mathbb{E}\left[ 
         V(yZ)\cdot\kappa_T
        \right],& y>0,\vspace{2mm}\\
\end{array}
\end{displaymath}
\end{enumerate}

\end{Theorem}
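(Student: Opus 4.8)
The plan is to reduce all three assertions to the abstract duality theory of \cite{Mostovyi2011}, exactly as announced before Proposition~\ref{dualCharacterization}. That theory yields precisely conclusions (1) and (2) for an abstract pair of families $(\mathcal A(x))_{x>0}$, $(\mathcal Y(y))_{y>0}$ of nonnegative optional processes, provided that: $\mathcal A(x)=x\mathcal A(1)$ and $\mathcal Y(y)=y\mathcal Y(1)$; the sets $\mathcal A(1)$ and $\mathcal Y(1)$ are convex, solid, bounded in $\mathbb L^0(d\kappa\times\mathbb P)$ and closed in $\mathbb L^0(d\kappa\times\mathbb P)$; each contains a strictly positive element; and the bipolar relations
\[
 c\in\mathcal A(1)\iff \mathbb E[(cY)\cdot\kappa_T]\leq 1\ \text{ for all }Y\in\mathcal Y(1),
\]
together with the symmetric statement characterising $\mathcal Y(1)$, hold. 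Under the standing finiteness hypotheses $v(y)<\infty$, $u(x)>-\infty$ imposed in the theorem, and using that $-V$ satisfies Assumption~\ref{Assumption1} and that $\kappa$ satisfies \eqref{stochasticClock}, the quoted theorems of \cite{Mostovyi2011} then give (1) and (2) verbatim. So the first task is to check this list of structural properties.

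Most of the verification is routine. The scaling relations $\mathcal A(x)=x\mathcal A(1)$, $\mathcal Y(y)=y\mathcal Y(1)$ follow from the definitions and the linearity of the stochastic integral with respect to a sequence of semimartingales in the sense of \cite{DeDonnoPratelli}. Solidity of $\mathcal A(1)$ is clear, since $x+H\cdot S-c\cdot\kappa\geq 0$ and $0\leq c'\leq c$ imply $x+H\cdot S-c'\cdot\kappa\geq 0$; solidity of $\mathcal Y(1)$ is built into its definition. Convexity of $\mathcal A(1)$ comes from linearity of $H\mapsto H\cdot S$ on $x$-admissible generalized strategies, and that of $\mathcal Y(1)$ is inherited by the $\mathbb L^0$-closure from the convex generating family $\{Y:0\leq Y\leq Z,\ Z\in\mathcal Z\}$. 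The set $\mathcal A(1)$ contains the strictly positive constant process $1/A$, and $\mathcal Y(1)$ contains any $Z\in\mathcal Z$, which is strictly positive, the family $\mathcal Z$ being nonempty by \eqref{ZisNotEmpty}; $\mathbb L^0$-boundedness of both sets then follows from the bipolar relation paired with these strictly positive elements. The set $\mathcal Y(1)$ is $\mathbb L^0$-closed by construction. Two points are genuinely substantial: (a) $\mathbb L^0$-closedness of $\mathcal A(1)$, which rests on the closedness in the semimartingale topology of the relevant sets of wealth processes established in \cite{DeDonnoGuasoniPratelli, Kardaras_emery} combined with a Koml\'os/Fatou-type extraction; and (b) the bipolar relation itself, which is the content of Proposition~\ref{dualCharacterization}. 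I expect (b) to be the \emph{main obstacle} of the whole argument — it is what links the concrete large-market description to the abstract framework.

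Finally, for assertion (3): the inequality $v(y)\leq\inf_{Z\in\mathcal Z}\mathbb E[V(yZ)\cdot\kappa_T]$ is immediate, since $yZ$ is c\`adl\`ag, adapted and trivially satisfies $0\leq yZ\leq yZ$, hence $yZ\in\mathcal Y(y)$. For the reverse inequality, use that $V(t,\omega,\cdot)$ is decreasing: for any $Y$ in the generating family of $\mathcal Y(y)$, say $0\leq Y\leq yZ$ with $Z\in\mathcal Z$, we get $V(Y)\geq V(yZ)$ pointwise and hence $\mathbb E[V(Y)\cdot\kappa_T]\geq\mathbb E[V(yZ)\cdot\kappa_T]\geq\inf_{Z'\in\mathcal Z}\mathbb E[V(yZ')\cdot\kappa_T]$, where the convention \eqref{wellDefined_v} and the bound $\mathbb E[V^{+}(yZ)\cdot\kappa_T]\leq\mathbb E[V^{+}(Y)\cdot\kappa_T]$ make all expressions meaningful. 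To pass from this generating family to its $\mathbb L^0$-closure, take the dual optimizer $\hat Y(y)\in\mathcal Y(y)$ furnished by assertion (2), choose an approximating sequence $Y^k\to\hat Y(y)$ in $\mathbb L^0$ with $0\leq Y^k\leq yZ^k$, $Z^k\in\mathcal Z$, pass to a $(d\kappa\times\mathbb P)$-a.e. convergent subsequence, and combine continuity of $V(t,\omega,\cdot)$ with a uniform-integrability argument for $\{V(Y^k)\cdot\kappa_T\}$ — here $v(y)<\infty$ and the control of the positive parts via \eqref{wellDefined_v} are exactly what is needed — to obtain $\mathbb E[V(yZ^k)\cdot\kappa_T]\leq\mathbb E[V(Y^k)\cdot\kappa_T]\to\mathbb E[V(\hat Y(y))\cdot\kappa_T]=v(y)$, so that $\inf_{Z\in\mathcal Z}\mathbb E[V(yZ)\cdot\kappa_T]\leq v(y)$. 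The delicate point in part (3) is precisely this integrability control when passing to the limit along the approximating sequence.
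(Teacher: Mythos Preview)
Your treatment of assertions (1) and (2) matches the paper: reduce everything to the abstract duality theory of \cite{Mostovyi2011} by verifying the structural properties of $\mathcal A(1)$ and $\mathcal Y(1)$, with Proposition~\ref{dualCharacterization} as the substantive input. One small correction: once you have the dual characterization, $\mathbb L^0$-closedness of $\mathcal A(1)$ follows immediately from Fatou's lemma applied to the inequality $\mathbb E[(cZ)\cdot\kappa_T]\leq 1$; there is no need to invoke closedness of wealth-process sets in the semimartingale topology.

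For assertion (3), however, your argument has a real gap and differs from the paper's route. You take an approximating sequence $Y^k\to\hat Y(y)$ with $0\leq Y^k\leq yZ^k$ and claim $\mathbb E[V(Y^k)\cdot\kappa_T]\to v(y)$ via ``a uniform-integrability argument''. But nothing you have written controls $V^{+}(Y^k)$: the $Y^k$ can be arbitrarily small on sets of positive $(d\kappa\times\mathbb P)$-measure (the dominating $Z^k$ vary with $k$), so $V^{+}(Y^k)$ need not be uniformly integrable, and the finiteness $v(y)<\infty$ bounds only $\mathbb E[V^{+}(\hat Y(y))\cdot\kappa_T]$, not the approximants. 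Convergence of $\mathbb E[V(Y^k)\cdot\kappa_T]$ to $v(y)$ therefore does not follow.

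The paper avoids this limiting issue altogether. It establishes (Lemma~\ref{propertiesOfZ}) two structural facts about $\mathcal Z$: first, $\sup_{Z\in\mathcal Z}\mathbb E[(cZ)\cdot\kappa_T]=\sup_{Y\in\mathcal Y(1)}\mathbb E[(cY)\cdot\kappa_T]$ for all $c\in\mathcal A(1)$; second, $\mathcal Z$ is closed under \emph{countable} convex combinations, the key observation being that for a nonnegative $X=x+H\cdot S$ the stopping times $\tau^k=\inf\{t:X_t>k\}\wedge T$ form a localizing sequence for $XZ$ \emph{simultaneously for every} $Z\in\mathcal Z$. These two properties are exactly the hypotheses of Theorem~3.3 in \cite{Mostovyi2011}, which then yields $v(y)=\inf_{Z\in\mathcal Z}\mathbb E[V(yZ)\cdot\kappa_T]$ directly. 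The countable-convexity of $\mathcal Z$ is what lets one run a Koml\'os-type extraction \emph{inside} $\mathcal Z$ rather than in $\mathcal Y(1)$, sidestepping the integrability problem your approach runs into.
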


\subsection{Large market as a limit of a sequence of finite-dimensional markets}
Motivated by the question of liquidity, we discuss the convergence of the value functions as the number of available traded securities increases.
For this purpose, we need the following definitions. 
For every $n\geq 1$, we set
\begin{equation}\nonumber
\begin{array}{rcl}
 \mathcal A^n(x)&\triangleq& \left\{ {\text optional}~ c\geq 0: {\text ~there ~exists~}H\in\mathcal H^n{\text~s.t.~}\right.\\
 &&~~\left. 
 x + H\cdot S_T - c\cdot\kappa_T\geq 0~\Pas\right\}, \\
 \end{array}
\end{equation}
\begin{equation}\label{primalProblemN}
  u^n(x) \set \sup\limits_{c\in\mathcal{A}^n(x)} \mathbb{E}\left[
     U(c)\cdot \kappa_T
  \right],\quad x>0,
\end{equation}
\begin{equation}\nonumber
\begin{array}{rcl}
{\mathcal Y}^n(y) &\triangleq &{\rm cl}\left\{Y: Y{\rm~is~c\grave adl\grave ag~adapted~and }
\right.\\
&&\quad ~~\left. 0\leq Y\leq yZ ~\left(d\kappa\times\mathbb
    P\right){\rm~a.e.~for~some~}Z\in{\mathcal Z}^n \right\},
\end{array}
\end{equation}
where the closure is taken in $\mathbb L^0$,
\begin{equation}\label{dualProblemN}
  v^n(y) \set \inf\limits_{Y\in{\mathcal Y}^n(y)} \mathbb{E}\left[
       V(Y)\cdot \kappa_T
  \right],\quad y>0,
\end{equation}
and assume the conventions (\ref{wellDefined_u}) and (\ref{wellDefined_v}). Note that for every 
$z>0$, both $(u^n(z))$ and $(v^n(z))$ are increasing sequences. 
We suppose that
\begin{equation}\label{closureA}
\mathcal A(1-\varepsilon) \subset {\rm cl}\left(\bigcup\limits_{n\geq 1}\mathcal A^n(1)\right)\quad
{\rm for~every}\quad \varepsilon\in(0,1],
\end{equation}
where the closure is taken in $\mathbb L^0$.

Let $1_E$ denotes the indicator function of a set $E$. 
\begin{Remark}\label{keyRemark}
It follows from Proposition~\ref{dualCharacterization} below and Fatou's lemma that ${\rm cl}\left(\bigcup\limits_{n\geq 1}\mathcal A^n(1)\right) \subseteq \mathcal A(1)$.  Assumption (\ref{closureA}) gives a weaker version of the reverse inclusion. Note that (\ref{closureA})
 holds if either of the conditions below~is~valid.
\begin{enumerate}
\item
 $\kappa_t = 1_{T}(t)$, $t\in[0,T]$,
  i.e. if (\ref{primalProblem}) defines the problem of optimal investment from terminal wealth. Then (\ref{closureA}) follows from    
Lemma 3.4~in~\cite{DeDonnoGuasoniPratelli}.

\item The process $S$ is (componentwise) continuous. This is the subject of Lemma~\ref{3-16-1} below.
\end{enumerate}
\end{Remark}
\begin{Lemma}\label{secondMain}
Assume that there exists $n\in\mathbb N$, such that 
\begin{equation}\label{secondFiniteness}
u^n(x)>-\infty\quad {\text for~ every}\quad x>0,\quad v(y)<+\infty\quad{\text for~ every}\quad y>0.
\end{equation}
Then, under conditions (\ref{stochasticClock}), (\ref{ZisNotEmpty}), and (\ref{closureA}) as well as 
  Assumptions~\ref{Assumption1}, we have 
\begin{equation}\label{9-18-2} u(x) = \lim\limits_{n\to\infty}u^n(x),
\quad x>0,\quad {\text and}\quad  v(y)= \lim\limits_{n\to\infty}v^n(y),
\quad y>0.
\end{equation}
\end{Lemma}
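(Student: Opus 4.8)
The plan is to combine four ingredients: the monotonicity of $(u^n)$ and $(v^n)$, the finite-dimensional duality of \cite{Mostovyi2011} applied to the first $n$ stocks, the conjugacy of $(u,v)$ from Theorem~\ref{mainTheorem}, and the density condition \eqref{closureA}. First I would record the elementary inequalities. By Remark~\ref{keyRemark}, $\mathcal A^n(x)\subseteq\mathcal A(x)$, and since $\mathcal Z\subseteq\mathcal Z^n$ also $\mathcal Y(y)\subseteq\mathcal Y^n(y)$; together with the monotonicity of the sequences noted after \eqref{dualProblemN}, this gives $u^n(x)\uparrow$ with $u^n(x)\le u(x)$ and $v^n(y)\uparrow$ with $v^n(y)\le v(y)$. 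With $n_0$ as in \eqref{secondFiniteness} one has $u(x)\ge u^{n_0}(x)>-\infty$ for all $x>0$ and $v(y)<+\infty$ for all $y>0$, so the hypotheses of Theorem~\ref{mainTheorem} are met, and I would use its conclusions freely (in particular that $u$ is finite and continuous on $(0,\infty)$ and that $u,v$ are conjugate). For every $n\ge n_0$ one also has $u^n(x)\ge u^{n_0}(x)>-\infty$ and $v^n(y)\le v(y)<+\infty$; since the finite-dimensional market of the first $n$ stocks fits the abstract framework of \cite{Mostovyi2011} (its polarity being the finite-dimensional instance of Proposition~\ref{dualCharacterization}), that theorem yields $u^n,v^n$ finite on $(0,\infty)$ and conjugate, in particular $v^n(y)=\sup_{x>0}\big(u^n(x)-xy\big)$.

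Next I would set $\tilde u(x):=\lim_n u^n(x)=\sup_n u^n(x)$ and $\tilde v(y):=\lim_n v^n(y)=\sup_n v^n(y)$, so that $u^{n_0}(x)\le\tilde u(x)\le u(x)<\infty$. Interchanging suprema,
\[
\tilde v(y)=\sup_n\sup_{x>0}\big(u^n(x)-xy\big)=\sup_{x>0}\big(\tilde u(x)-xy\big),\qquad y>0.
\]
Hence, once I establish $\tilde u=u$ on $(0,\infty)$, I obtain $\tilde v(y)=\sup_{x>0}\big(u(x)-xy\big)=v(y)$ by Theorem~\ref{mainTheorem}(1), and both identities in \eqref{9-18-2} follow. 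Since $\tilde u\le u$ is already in hand, the task reduces to proving $\tilde u(x)\ge u(x)$ for all $x>0$.

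To that end I would fix $x>0$ and $\varepsilon\in(0,1)$ and aim for $\tilde u(x)\ge u\big(x(1-\varepsilon)\big)$, after which letting $\varepsilon\downarrow0$ and using continuity of $u$ concludes. Take $c\in\mathcal A\big(x(1-\varepsilon)\big)$; we may assume $\mathbb E[U^-(c)\cdot\kappa_T]<\infty$, since otherwise $\mathbb E[U(c)\cdot\kappa_T]=-\infty$ and there is nothing to prove for this $c$. Rescaling \eqref{closureA} by $x$ gives $c\in\mathrm{cl}\big(\bigcup_n\mathcal A^n(x)\big)$, so choose $c_k\in\mathcal A^{n_k}(x)$ with $c_k\to c$ in $\mathbb L^0$ and, along a subsequence, $(d\kappa\times\mathbb P)$-a.e. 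For $\delta\in(0,1)$ put $c_k^\delta:=(1-\delta)c_k+\delta\,\frac xA$; since the constant $x/A$ lies in $\mathcal A^{n_k}(x)$ (as $\frac xA\kappa_T\le x$) and the sets $\mathcal A^{n_k}(x)$ are convex, $c_k^\delta\in\mathcal A^{n_k}(x)$, while $c_k^\delta\ge\delta x/A>0$ and $c_k^\delta\to c^\delta:=(1-\delta)c+\delta\,\frac xA$ a.e. Because $U^-(c_k^\delta)\le U^-(\delta x/A)$ with $\mathbb E[U^-(\delta x/A)\cdot\kappa_T]<\infty$ (finiteness of the expected utility of a strictly positive constant consumption, which follows from $u>-\infty$ on $(0,\infty)$; cf.\ \cite{Mostovyi2011}), dominated convergence gives $\mathbb E[U^-(c_k^\delta)\cdot\kappa_T]\to\mathbb E[U^-(c^\delta)\cdot\kappa_T]$, and Fatou gives $\liminf_k\mathbb E[U^+(c_k^\delta)\cdot\kappa_T]\ge\mathbb E[U^+(c^\delta)\cdot\kappa_T]$; combined with $\mathbb E[U(c_k^\delta)\cdot\kappa_T]\le u^{n_k}(x)\le\tilde u(x)$ this yields $\tilde u(x)\ge\limsup_k\mathbb E[U(c_k^\delta)\cdot\kappa_T]\ge\mathbb E[U(c^\delta)\cdot\kappa_T]$. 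Finally I would let $\delta\downarrow0$: as $c^\delta$ is a convex combination of $c$ and $x/A$, one has $c^\delta\ge\min(c,x/A)=:\underline c$ with $\mathbb E[U^-(\underline c)\cdot\kappa_T]\le\mathbb E[U^-(c)\cdot\kappa_T]+\mathbb E[U^-(x/A)\cdot\kappa_T]<\infty$, so a further application of Fatou gives $\liminf_{\delta\downarrow0}\mathbb E[U(c^\delta)\cdot\kappa_T]\ge\mathbb E[U(c)\cdot\kappa_T]$; hence $\tilde u(x)\ge\mathbb E[U(c)\cdot\kappa_T]$, and taking the supremum over $c\in\mathcal A\big(x(1-\varepsilon)\big)$ gives $\tilde u(x)\ge u\big(x(1-\varepsilon)\big)$, as required.

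The main obstacle is precisely the upper semicontinuity of $c\mapsto\mathbb E[U(c)\cdot\kappa_T]$ along $\mathbb L^0$-convergent sequences, i.e.\ the step $\limsup_k\mathbb E[U(c_k^\delta)\cdot\kappa_T]\ge\mathbb E[U(c^\delta)\cdot\kappa_T]$: since $U$ is unbounded above (Inada) and may take the value $-\infty$ at $0$, both $U^+$ and $U^-$ must be controlled. The positive part is handled by Fatou; the negative part is what forces the regularization from $c_k$ to $c_k^\delta$ keeping the consumptions uniformly bounded away from $0$, and it rests on the integrability $\mathbb E[U^-(\delta x/A)\cdot\kappa_T]<\infty$, the finite-consumption shadow of the standing finiteness hypotheses, which I would import from \cite{Mostovyi2011}. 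A secondary point is checking that the finite-dimensional markets do satisfy the hypotheses of the abstract theorem of \cite{Mostovyi2011}, i.e.\ the finite-dimensional analogue of Proposition~\ref{dualCharacterization}.
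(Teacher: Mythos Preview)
Your argument is correct and takes a genuinely different route from the paper's. The paper works on the \emph{dual} side: it picks the optimizers $Z^n\in\mathcal Y^n(1)$, extracts via a Koml\'os--type lemma a $(d\kappa\times\mathbb P)$--a.e.\ convergent sequence of forward convex combinations $\widetilde Z^n$, uses the polar reformulation of \eqref{closureA} (Lemma~\ref{lemma2-7-1}) to place the limit in $\mathcal Y\big(\tfrac{1}{1-\varepsilon}\big)$, and passes to the limit in $\mathbb E[V(\widetilde Z^n)\cdot\kappa_T]$ by Fatou combined with the uniform integrability of $V^-(\widetilde Z^n)$ from Lemma~3.5 of \cite{Mostovyi2011}; the primal identity is then declared analogous. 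You instead attack the \emph{primal} side directly, exploiting \eqref{closureA} in its original form to approximate each $c\in\mathcal A(x(1-\varepsilon))$ by elementary consumptions, and then transport the conclusion to the dual side through the elegant interchange $\tilde v(y)=\sup_n\sup_x\big(u^n(x)-xy\big)=\sup_x\big(\tilde u(x)-xy\big)$. Your approach avoids the Koml\'os compactness step; the paper's approach has the advantage that the requisite uniform--integrability input is an off--the--shelf lemma.

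One point deserves care. The domination $U^-(c_k^\delta)\le U^-(\delta x/A)$ with $\mathbb E[U^-(\delta x/A)\cdot\kappa_T]<\infty$ does not follow from $u>-\infty$ alone in the random--utility setting: $u(x)>-\infty$ only guarantees that \emph{some} $c\in\mathcal A(x)$ has $\mathbb E[U^-(c)\cdot\kappa_T]<\infty$, not the constant one. If the precise statement you quote from \cite{Mostovyi2011} does not cover this, a clean repair is to regularise with a finite--dimensional optimizer instead of a constant: for $n_k\ge n_0$ set $c_k^\delta:=(1-\delta)c_k+\hat c^{\,n_0}(\delta x)\in\mathcal A^{n_k}(x)$, so that $U^-(c_k^\delta)\le U^-\big(\hat c^{\,n_0}(\delta x)\big)$, which is integrable since $u^{n_0}(\delta x)>-\infty$; Fatou then gives $\tilde u(x)\ge\mathbb E[U(c^\delta)\cdot\kappa_T]$ with $c^\delta\ge(1-\delta)c$, hence $\tilde u(x)\ge\mathbb E[U((1-\delta)c)\cdot\kappa_T]$, and taking the supremum over $c$ followed by $\delta,\varepsilon\downarrow0$ yields $\tilde u(x)\ge u\big((1-\delta)(1-\varepsilon)x\big)\to u(x)$. (Minor terminological slip: the property you establish and need is \emph{lower} semicontinuity of $c\mapsto\mathbb E[U(c)\cdot\kappa_T]$, not upper.)
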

{
\begin{Remark}
(\ref{secondFiniteness}) imply finiteness of $v$, $-u$,  $v^n$, and $-u^n$, $n\geq 1$, that are also convex.
Theorem 3.1.4 in~\cite{LemHur} ensures that convergence in (\ref{9-18-2}) is uniform on compact subsets of $(0, \infty)$.
Moreover, Theorem 25.7 in \cite{Rok} asserts that the derivatives $(v^n)'$ and $(u^n)'$, $n\geq 1$, also converge uniformly on compact intervals in 
$(0,\infty)$ to $v'$ and $u'$, respectively.
\end{Remark}
}

{
Lemma~\ref{secondMain} shows that the value function in the market with countably many assets is the limit of
the value functions of the finite dimensional models.
The following example shows that the optimal portfolio in 
the market with infinitely many traded assets is not a limit of the optimal portfolios in the finite dimensional markets, in general. 
The important technical feature in the construction of this example, is that
in  each finite dimensional market {\it the last stock has the biggest expected return}.  
}

\begin{Example}
{We consider a one-period model, where there is a riskless bond with $S^0\equiv 1$, and a sequence of stocks $(S^i)$, such 
that $ S^i_0 = 1$ for every $i$ and $(S^i_1)$  are independent random variables taking values in $\{\tfrac{1}{2}, 2\}$ with probabilities
$1-p_i$ and  $p_i$ respectively, where $(p_i)$ is an {\it increasing} sequence. Therefore, we have
$$\max\limits_{k\in\{1,\dots,n\}} \mathbb E\left[ S^k_1\right] = \mathbb E\left[ S^n_1\right],\quad n\geq 1,$$
i.e. the last stock of each finite dimensional market has the greatest expected return. 
Note that (\ref{ZisNotEmpty}) holds.
}

{
We assume that the preferences of an economic agent 
are specified by a bounded utility function $U$ defined on the positive real line that is strictly increasing, strictly concave, continuously differentiable and 
satisfies the Inada conditions.  Let the stochastic clock $\kappa$ corresponds to the problem of utility maximization of terminal wealth.
Then (\ref{closureA}) holds by the first item of Remark~\ref{keyRemark}, whereas boundedness of $U$ implies (\ref{secondFiniteness}).
Therefore, the assertions of Lemma~\ref{secondMain} hold. We also impose the following technical assumption  
\begin{equation}\label{9-18-1}
p_1> \frac{U\left(1\right) - U\left(\tfrac{1}{2}\right)}{U\left(2\right) - U\left(\tfrac{1}{2}\right)}\vee \frac{1}{3},
\end{equation}
which in particular implies that 
\begin{equation}\label{9-18-5}
U(1) = 
\mathbb E\left[U(S^0_1) \right] < \mathbb E\left[ U(S^1_1)\right].
\end{equation} 
}

{
For simplicity of notations, we will assume that the initial wealth of the agent equals to $1$.
Let $h^N_i$ be the optimal number of shares of the $i$-th asset in the market, where $N$ stocks are available for trading, $N\geq 1$.
Admissibility condition implies that $h^N_0 \geq 0$, i.e. the number of shares of the riskless asset must be nonnegative.
Monotonicity of $(p_i)$ results in the following inequalities
\begin{equation}\label{9-18-3}
h^N_1\leq h^N_2\leq \dots\leq h^N_N,\quad N\geq 1.
\end{equation}
It follows from convexity and monotonicity of $U$ as well as  (\ref{9-18-1}) that $h^N_i \geq 0$ (if, by contradiction, $h^N_i<0$, a portfolio with 
$0$ units of $i$-th stock and $h^N_0 + h^N_i$ units of the riskless asset is admissible, it corresponds to the same initial wealth and gives a higher value of the expected utility).
Nonnegativity of $h^N_i$'s and (\ref{9-18-3}) gives
\begin{displaymath}
h^N_i \leq \frac{1}{N-i +1},\quad i = 1,\dots,N,\quad N\geq 1,
\end{displaymath}  
This implies that 
\begin{equation}\label{9-18-4}
\lim\limits_{N\to\infty} h^N_i = 0,\quad i\geq 1\textsc{}.
\end{equation}
Consequently, in the market with countably many stocks, a portfolio that 
is the limit of the optimal finite dimensional portfolios (i.e. satisfies (\ref{9-18-4}))
can  have nontrivial allocation only in the riskless asset. This gives the value of the expected utility $U(1)$.  In view of (\ref{9-18-5}),
such a portfolio is suboptimal.
}
\end{Example} 

\section{Proofs}\label{sectionProofs}
In the core of the proof of Theorem~\ref{mainTheorem} lies the following result. 
 \begin{Proposition}\label{dualCharacterization}
 Let conditions  (\ref{stochasticClock}) and (\ref{ZisNotEmpty}) hold. Then 
a nonnegative optional process 
    $c$ belongs to $\mathcal A(1)$ if and only if
 \begin{equation}\label{dualChar}
  \sup_{Z\in\mathcal Z}\mathbb E\left[ 
    \left((cZ)\cdot \kappa\right)_T  
  \right] \leq 1.
 \end{equation} 
 \end{Proposition}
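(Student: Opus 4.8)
The plan is to prove the two implications separately. The ``only if'' direction is the easy one: if $c\in\mathcal A(1)$, then there is a $1$-admissible generalized strategy $H$ with $1+H\cdot S-c\cdot\kappa\ge 0$. By definition of such a strategy there is an approximating sequence $(H^n)$ of $1$-admissible elementary strategies with $H^n\cdot S\to H\cdot S$ in the semimartingale topology. Fix $Z\in\mathcal Z$. For each $n$, $H^n$ is $m$-elementary for some $m$ and $1$-admissible, so $(1+H^n\cdot S)Z$ is a nonnegative local martingale, hence a supermartingale, giving $\mathbb E[(1+H^n\cdot S_T)Z_T]\le 1$. Passing to the limit along a subsequence for which $H^n\cdot S_T\to H\cdot S_T$ a.s. (semimartingale convergence gives u.c.p. convergence, hence a.s.\ along a subsequence) and invoking Fatou — the integrands $1+H^n\cdot S$ are not uniformly bounded below, but one can apply Fatou to $(x+H^n\cdot S)Z$ after noting each is $\ge 0$ — yields $\mathbb E[(1+H\cdot S_T)Z_T]\le 1$. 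Then $c\cdot\kappa\le 1+H\cdot S$ together with the supermartingale property of $(1+H\cdot S)Z$ (again a nonnegative local martingale) and integration by parts / optional projection gives $\mathbb E[((cZ)\cdot\kappa)_T]\le\mathbb E[(1+H\cdot S_T)Z_T]\le 1$. I would be a little careful here: one should justify $\mathbb E[((cZ)\cdot\kappa)_T]\le\mathbb E[(1+H\cdot S)_\tau Z_\tau]$ type estimates by a localization/stopping argument, using that $c\cdot\kappa$ is nondecreasing and $\kappa_T\le A$.

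The ``if'' direction is the substance. Suppose $c\ge 0$ is optional with $\sup_{Z\in\mathcal Z}\mathbb E[((cZ)\cdot\kappa)_T]\le 1$; I want to produce a $1$-admissible generalized strategy $H$ with $1+H\cdot S-c\cdot\kappa\ge0$. The natural route is to pass through the finite-dimensional markets and then take a limit. For each $n$, consider the $n$-th market with semimartingales $(S^i)_{i\le n}$ and the dual set $\mathcal Z^n\supseteq\mathcal Z$. Here one would like to invoke a finite-dimensional superreplication / optional decomposition theorem (as in the abstract framework of \cite{Mostovyi2011} or the classical results behind \cite{DeDonnoGuasoniPratelli}): the budget constraint $\sup_{Z\in\mathcal Z^n}\mathbb E[((cZ)\cdot\kappa)_T]\le1$ characterizes membership of $c$ in $\mathcal A^n(1)$. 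The obstacle is that we only control $\sup_{Z\in\mathcal Z}$, not $\sup_{Z\in\mathcal Z^n}$, and $\mathcal Z$ may be strictly smaller. So the real work is an approximation: from the constraint over $\mathcal Z$ one shows that for every $\varepsilon>0$ and every $n$, the truncated/modified consumption (e.g.\ $c\wedge k$ on a large set, scaled by $1-\varepsilon$) satisfies the $\mathcal Z^n$-budget constraint, or alternatively one works directly in the large market.

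I think the cleanest architecture is: (i) reduce to bounded $c$ by truncation, monotone convergence, and a diagonal argument, so it suffices to superreplicate $(1-\varepsilon)c$ with $c$ bounded; (ii) for bounded $c$, apply a bipolar / Komlós-type argument in $\mathbb L^0(d\kappa\times\mathbb P)$: the set $\mathcal C$ of processes dominated by admissible wealth processes in the large market is a closed, convex, solid subset of $\mathbb L^0_+$ (closedness in the semimartingale topology of the large-market wealth processes is the ingredient coming from \cite{DeDonnoGuasoniPratelli, Kardaras_emery}), and its polar with respect to the pairing $(c,Z)\mapsto\mathbb E[((cZ)\cdot\kappa)_T]$ is exactly generated by $\mathcal Z$; (iii) conclude by the bipolar theorem that the budget constraint \eqref{dualChar} forces $c\in\mathcal A(1)$. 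The hard part will be step (ii) — specifically, verifying that the closure of the large-market admissible consumption set in $\mathbb L^0(d\kappa\times\mathbb P)$ is again of the form ``$c$ such that $x+H\cdot S-c\cdot\kappa\ge0$ for some generalized strategy $H$,'' i.e.\ that no mass escapes when passing to the limit, and that the polar is genuinely $\mathcal Z$ rather than a larger set of supermartingale densities. Here I would lean on the semimartingale-topology closedness of the generalized wealth processes together with a Fatou-closedness argument (Fatou's lemma in the form used in Remark~\ref{keyRemark} already gives one inclusion), and on condition \eqref{ZisNotEmpty} to guarantee the polar set is nonempty so that the bipolar theorem applies nondegenerately.
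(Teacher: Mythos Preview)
Your ``only if'' direction is essentially the paper's argument: approximate $H$ by $1$-admissible elementary $H^n$, use that each $(1+H^n\cdot S)Z$ is a nonnegative supermartingale to get $\mathbb E[Z_T(1+H^n\cdot S_T)]\le 1$, pass to an a.s.\ convergent subsequence, and apply Fatou. The last step you hesitate over is handled in the paper cleanly by the martingale identity $\mathbb E[((cZ)\cdot\kappa)_T]=\mathbb E[Z_T(c\cdot\kappa_T)]$ (optional projection, since $Z$ is a true martingale and $c\cdot\kappa$ is increasing), after which $c\cdot\kappa_T\le 1+H\cdot S_T$ finishes it; no localization of the kind you sketch is needed.

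For the ``if'' direction you are working much harder than necessary, and the hard part you isolate is precisely a theorem that already exists and that the paper simply quotes. The key observation you are missing is again the identity $\mathbb E[((cZ)\cdot\kappa)_T]=\mathbb E[Z_T(c\cdot\kappa_T)]$: it turns the hypothesis into $\sup_{Z\in\mathcal Z}\mathbb E[Z_T\,(c\cdot\kappa_T)]\le 1$, which is a \emph{terminal-claim} budget constraint for the single random variable $c\cdot\kappa_T$. That is exactly the input of the large-market superreplication theorem of De~Donno--Guasoni--Pratelli (their Theorem~3.1), which directly produces a $1$-admissible generalized strategy $H$ with $c\cdot\kappa_T\le 1+H\cdot S_T$. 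One then needs the short Lemma~\ref{SholiBounded}: since $Z(1+H\cdot S)$ is a supermartingale and $c\cdot\kappa$ is nondecreasing, the terminal inequality automatically propagates to $c\cdot\kappa_t\le 1+H\cdot S_t$ for all $t$, so $c\in\mathcal A(1)$.

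Your proposed route through finite-dimensional markets and a bipolar argument in $\mathbb L^0(d\kappa\times\mathbb P)$ is not wrong in spirit, but it amounts to reproving the DDGP superreplication theorem inside a consumption space, and the step you flag as ``the hard part''---showing that the polar of the admissible set is generated by $\mathcal Z$ rather than by a larger class of supermartingale deflators---is exactly the nontrivial content of that theorem. As written, that step is a genuine gap in your outline; you do not explain how to rule out the extra deflators, and the Fatou/closedness remarks you cite only give the easy inclusion. The paper sidesteps all of this by reducing to terminal time and invoking the existing result.
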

The proof of Proposition~\ref{dualCharacterization} will be given via several lemmas. 
\begin{Lemma}\label{SholiPositive}
Let $H$ be a 1-admissible generalized integrand. Under the conditions 
 Proposition~\ref{dualCharacterization}, $X \triangleq 1 + H\cdot S$ is nonnegative \Pas ~and  
for every $Z\in\mathcal Z$, $ZX$ is a supermartingale.
\end{Lemma}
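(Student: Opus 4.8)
The plan is to reduce the statement about a generalized strategy $H$ to the corresponding statements about the approximating sequence of elementary $1$-admissible strategies. By definition of an $x$-admissible generalized strategy, there is a sequence $(H^n)$ of $1$-admissible elementary strategies such that $H^n \cdot S \to H \cdot S$ in the semimartingale topology. For each $n$, $H^n$ is $n_k$-elementary for some $n_k$, so $H^n \cdot S \geq -1$ $\P$-a.s. by definition of $1$-admissibility; hence $X^n \triangleq 1 + H^n \cdot S \geq 0$ $\P$-a.s. Since convergence in the semimartingale topology implies, along a subsequence, uniform convergence on compacts in probability (in particular $(H^n\cdot S)_T \to (H\cdot S)_T$ in probability along a subsequence), passing to the limit preserves the inequality $\geq -1$, so $X = 1 + H \cdot S \geq 0$ $\P$-a.s. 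Here I would be slightly careful that the semimartingale-topology limit of the $H^n\cdot S$ is indeed $H\cdot S$, which is exactly what the definition of integrability of $H$ w.r.t.\ $S$ together with the definition of an $x$-admissible generalized strategy gives us.

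The second claim is the supermartingale property of $ZX$ for $Z \in \mathcal Z$. First, fix $Z \in \mathcal Z \subseteq \mathcal Z^{n_k}$ for every $k$. Then for each elementary $1$-admissible $H^n$, the process $(1 + H^n\cdot S)Z = X^n Z$ is a local martingale by the definition of $\mathcal Z^{n_k}$. Moreover $X^n Z \geq 0$ (since $X^n \geq 0$ and $Z > 0$), so $X^n Z$ is a nonnegative local martingale, hence a supermartingale; in particular $\E[X^n_t Z_t \mid \mathcal F_s] \leq X^n_s Z_s$ for $s \leq t$ and $\E[X^n_t Z_t] \leq X^n_0 Z_0 = 1$. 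The goal is to pass to the limit in $n$. Along the subsequence on which $(H^n \cdot S)$ converges ucp to $H\cdot S$, we have $X^n_t \to X_t$ in probability for each $t$, hence $X^n_t Z_t \to X_t Z_t$ in probability.

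The main obstacle is upgrading this convergence in probability to something strong enough to pass to the limit inside the conditional expectation and obtain the supermartingale inequality for $ZX$; convergence in probability alone does not preserve conditional expectations. The standard route is to invoke uniform integrability or, more robustly here, Fatou's lemma for conditional expectations: since each $X^n Z \geq 0$, we have for $s \leq t$ that $\E[X_t Z_t \mid \mathcal F_s] = \E[\liminf_k X^{n_k}_t Z_t \mid \mathcal F_s] \leq \liminf_k \E[X^{n_k}_t Z_t \mid \mathcal F_s] \leq \liminf_k X^{n_k}_s Z_s = X_s Z_s$, where the last equality is again convergence in probability at the fixed time $s$ (and one may need to pass to a further subsequence to turn $\liminf$ into an honest limit, which is harmless). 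This also shows $\E[X_t Z_t] \leq 1 < \infty$, so $XZ$ is integrable at each time, and combined with the conditional Fatou estimate it is a genuine supermartingale. Adaptedness and the càdlàg property of $XZ$ follow since $X = 1 + H\cdot S$ is a càdlàg semimartingale and $Z$ is a càdlàg martingale. I would double-check the measurability bookkeeping in the Fatou step (choice of a single subsequence that works simultaneously for the relevant countable family of times, then extension to all $t$ by right-continuity), but this is routine rather than deep.
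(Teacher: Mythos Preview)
Your argument is correct and is exactly the standard route: approximate by $1$-admissible elementary strategies, use that each $X^nZ$ is a nonnegative local martingale (hence supermartingale) because $Z\in\mathcal Z\subseteq\mathcal Z^{n_k}$, and pass to the limit via conditional Fatou after extracting a subsequence along which $\sup_{u\leq T}|X^{n_k}_u-X_u|\to 0$ a.s.\ (which semimartingale convergence, hence ucp convergence, provides). The paper does not spell out a proof at all; it simply declares the lemma straightforward and points to the discussion on p.~2011 of De Donno--Guasoni--Pratelli for the supermartingale assertion, so there is nothing to compare against beyond confirming that your approach is the intended one.
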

{
The proof of Lemma~\ref{SholiPositive} is straightforward, it is therefore skipped. Note that discussion of the second assertion
of the lemma is presented on p. 2011 of~\cite{DeDonnoGuasoniPratelli}.
}
\begin{Lemma}\label{SholiBounded}
 Let $H$ be a $1$-admissible  generalized  strategy, $c$ be a nonnegative optional process. Under the conditions 
 Proposition~\ref{dualCharacterization}, the following statements are equivalent
 \begin{enumerate}[(i)]\item
 $$
  c\cdot\kappa_T\leq 1 + H\cdot S_T, \quad \Pas,
 $$
 \item
 \begin{equation}\nonumber
\begin{array}{c}
  c\cdot \kappa \leq 1 + H\cdot S,\quad \Pas\\
  ({\text i.e.}\quad c\cdot\kappa_t \leq 1+H\cdot S_t \quad {\text for~every~}t\in[0,T], \quad \Pas).\\
\end{array}
 \end{equation}
 \end{enumerate}
\end{Lemma}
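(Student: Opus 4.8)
The plan is to prove the equivalence of the two pathwise inequalities for a $1$-admissible generalized strategy $H$ and a nonnegative optional $c$. The implication (ii) $\Rightarrow$ (i) is trivial: evaluate the process inequality in (ii) at $t=T$, using that $\kappa$ is nondecreasing so $(c\cdot\kappa)_T$ is the total consumption. The content is in (i) $\Rightarrow$ (ii), and the natural tool is the supermartingale property supplied by Lemma~\ref{SholiPositive}: for any fixed $Z\in\mathcal Z$ (nonempty by \eqref{ZisNotEmpty}), the process $Z(1+H\cdot S)$ is a nonnegative supermartingale.

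First I would rewrite the desired inequality in a form amenable to the supermartingale machinery. Set $X \triangleq 1 + H\cdot S \geq 0$ (nonnegativity from Lemma~\ref{SholiPositive}). For a fixed stopping time $\tau\leq T$ we want $(c\cdot\kappa)_\tau \leq X_\tau$ $\P$-a.s.; since the left side is nondecreasing in $\tau$ and the class of stopping times is rich enough, establishing this for all $\tau$ gives the full path inequality in (ii) (one can reduce to a countable dense family of deterministic times and use right-continuity of both sides, $X$ being c\`adl\`ag and $c\cdot\kappa$ c\`adl\`ag since $\kappa$ is). Introduce $\tilde X_t \triangleq X_t + (c\cdot\kappa)_T - (c\cdot\kappa)_t$, the wealth one would have if consumption continued but the remaining scheduled consumption $(c\cdot\kappa)_T-(c\cdot\kappa)_t$ were set aside as a liability; equivalently $\tilde X_t = 1 + H\cdot S_t - (c\cdot\kappa)_t + (c\cdot\kappa)_T$. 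By hypothesis (i), $\tilde X_T = X_T - (c\cdot\kappa)_T + (c\cdot\kappa)_T \cdot\mathbf{0}$... more carefully: $\tilde X_T = X_T \geq (c\cdot\kappa)_T \geq 0$, and actually I want to track the process $X_t - (c\cdot\kappa)_t$.

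Here is the cleaner route. Let $Z\in\mathcal Z$ and consider $M_t \triangleq Z_t X_t$, a nonnegative supermartingale by Lemma~\ref{SholiPositive}. Fix a stopping time $\sigma\leq T$ and condition: $\E[Z_T X_T \mid \mathcal F_\sigma] \leq Z_\sigma X_\sigma$ by optional sampling for nonnegative supermartingales. Using (i), $X_T \geq (c\cdot\kappa)_T \geq (c\cdot\kappa)_T - (c\cdot\kappa)_\sigma =: \Delta_\sigma \geq 0$, which is $\mathcal F_\sigma$-measurable only after we also add $(c\cdot\kappa)_\sigma$; so write $X_T \geq (c\cdot\kappa)_\sigma + \Delta_\sigma \cdot\mathbf{1}$ — no: $X_T\ge (c\cdot\kappa)_T = (c\cdot\kappa)_\sigma + \Delta_\sigma$. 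Then $Z_\sigma X_\sigma \geq \E[Z_T X_T\mid\mathcal F_\sigma] \geq \E[Z_T\mid\mathcal F_\sigma](c\cdot\kappa)_\sigma + \E[Z_T\Delta_\sigma\mid\mathcal F_\sigma]\ge Z_\sigma (c\cdot\kappa)_\sigma$, using $\E[Z_T\mid\mathcal F_\sigma]=Z_\sigma$ ($Z$ is a true martingale) and nonnegativity of the last term. Dividing by $Z_\sigma>0$ yields $X_\sigma \geq (c\cdot\kappa)_\sigma$ $\P$-a.s., which is exactly (ii) evaluated at $\sigma$.

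The main obstacle is the measurability/regularity bookkeeping needed to pass from ``$X_\sigma \geq (c\cdot\kappa)_\sigma$ a.s.\ for each stopping time $\sigma$'' to ``$X_t \geq (c\cdot\kappa)_t$ for all $t$ simultaneously, a.s.'', and making sure optional sampling applies. For the former: apply the pointwise inequality along a countable dense set $D\subset[0,T]$ containing $T$, obtaining a single null set off which $X_q\geq(c\cdot\kappa)_q$ for all $q\in D$; then for arbitrary $t$ take $q_k\downarrow t$ in $D$ and use right-continuity of $X$ (c\`adl\`ag) and of $t\mapsto (c\cdot\kappa)_t$ (right-continuous since $\kappa$ is c\`adl\`ag and $c\ge0$) to conclude. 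For optional sampling: $M=ZX$ is a nonnegative supermartingale on the compact horizon $[0,T]$, so it is of class (D) issues aside, the optional sampling inequality $\E[M_T\mid\mathcal F_\sigma]\le M_\sigma$ holds for bounded stopping times without further hypotheses. One should also note at the outset that $\mathcal Z\neq\emptyset$ by \eqref{ZisNotEmpty}, so such a $Z$ exists; a single $Z$ suffices for the argument.
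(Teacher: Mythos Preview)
Your argument is correct and is essentially the same as the paper's: fix $Z\in\mathcal Z$, use that $Z$ is a martingale and $Z(1+H\cdot S)$ a nonnegative supermartingale (Lemma~\ref{SholiPositive}), combine with monotonicity of $c\cdot\kappa$ and hypothesis~(i) to get $Z_t(c\cdot\kappa)_t\le Z_t(1+H\cdot S_t)$, divide by $Z_t>0$, and pass to the full path via right-continuity. The paper simply runs this chain at deterministic times $t$ rather than stopping times, so your detour through optional sampling at a general $\sigma$ (only to fall back to a countable dense set of deterministic times) is unnecessary but harmless.
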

\begin{proof}
 Let us assume that $(i)$ holds and fix $Z\in\mathcal Z$. It follows from Lemma~\ref{SholiPositive} that $Z(1 + H\cdot S)$ is a supermartingale. Therefore, using monotonicity of $c\cdot\kappa$,
 for every $t\leq T$ we have
 \begin{displaymath}
 \begin{array}{c}
  Z_t (c\cdot\kappa_t)= \mathbb E\left[ Z_T(c\cdot\kappa_t)|\mathcal F_t\right]\leq \mathbb E\left[ Z_T(c\cdot\kappa_T)|\mathcal F_t\right] \\
  \leq  \mathbb E\left[Z_T (1+H\cdot S_T)|\mathcal F_t\right] \leq Z_t(1+H\cdot S_t),\\
  \end{array}
 \end{displaymath}
 which implies $(ii)$ in view of the {\it strict} positivity of $Z$ and the right-continuity of both $(1+H\cdot S)$ and $(c\cdot\kappa)$,
 where the latter follows e.g. from Proposition I.3.5 in~\cite{Jahod-Shiryaev}.

\end{proof}

\begin{proof}[Proof of Proposition~\ref{dualCharacterization}]

 Let $c\in\mathcal A(1)$. 
 Fix $Z\in\mathcal Z$ and $T>0$. Then 
there exists a $1$-admissible generalized strategy $H$, such that
\begin{displaymath}
 1 + H\cdot S_T \geq c\cdot\kappa_T.
\end{displaymath}
Multiplying both sides by $Z$ and taking the expectation, we get
\begin{equation}\label{12-18-1}
 \mathbb E\left[ Z_T(1 + H\cdot S_T)\right] \geq \mathbb E\left[ Z_T(c\cdot\kappa_T)\right],
\end{equation}
where the right-hand side (via monotonicity of $ c\cdot\kappa$
and an application of Theorem I.4.49 in \cite{Jahod-Shiryaev}) can be rewritten as
\begin{equation}\label{12-18-2}
 \mathbb E\left[ Z_T (c\cdot\kappa_T)\right] = \mathbb E\left[ ((Zc)\cdot\kappa)_T\right].
\end{equation}

By definition of $H$, there exists a sequence  $(H^n)$ of $1$-admissible elementary strategies, such that
\begin{displaymath}
 (H^n\cdot S)_{n\geq 1} \quad{\rm converges~to}\quad H\cdot S \quad{\rm in~the~semimartingale~topology}.
\end{displaymath}
Consequently, $(H^n\cdot S_T)$ converges to $H\cdot S_T$ in probability, and therefore there exist a subsequence, which we
still denote $(H^n\cdot S)$, such that $(H^n\cdot S_T)$ converges to $H\cdot S_T$ \Pas~Therefore, for every $Z\in\mathcal Z$ we 
obtain from  the definition of $1$-admissibility and Fatou's lemma 
\begin{displaymath}
 1\geq \liminf\limits_{n\to\infty}\mathbb E\left[ Z_T(1+ H^n\cdot S_T)\right] \geq \mathbb E\left[ Z_T(1+ H\cdot S_T)\right].
\end{displaymath}
Combining this with (\ref{12-18-1}) and (\ref{12-18-2}), we conclude that
\begin{displaymath}
 1\geq \mathbb E\left[ (( Zc)\cdot\kappa)_T\right],
\end{displaymath}
   which holds for every ${Z\in\mathcal Z}$.

Conversely, let (\ref{dualChar}) holds.  Using the same argument as in (\ref{12-18-2}), we obtain from (\ref{dualChar}) that
\begin{displaymath}
 1\geq \sup_{Z\in\mathcal Z}\mathbb E\left[ Z_T(c\cdot\kappa)_T\right].
\end{displaymath}
Consequently, the random variable $ c\cdot \kappa_T$ satisfies the assumption (i) of Theorem 3.1 in~\cite{DeDonnoGuasoniPratelli} with $x =1$.
Therefore, we obtain from this theorem that there exists a  $1$-admissible generalized strategy $H$ such that
\begin{displaymath}
 c\cdot\kappa_T \leq 1 +  H\cdot  S_T.
\end{displaymath}
By Lemma~\ref{SholiBounded}, this implies that $c\in\mathcal A(1)$.
This concludes the proof of the proposition.
\end{proof}
Let $\mathbb L^0_{+}$ denote the positive orthant of $\mathbb L^0$. We recall that a subset $A$ of $\mathbb L^0_{+}$ is called {\it solid} if $f\in A$, $g\in\mathbb L^0_{+}$, and 
$g\leq f$ implies that $g\in A$, a subset  $B\subset\mathbb L^0_{+}$ is the {\it polar} of $A$, if $B = \left\{h\in\mathbb L^0_{+}:~\mathbb E\left[ ((hf)\cdot\kappa)_T\right]\leq 1, {\rm ~for~every~}f\in 
A \right\}$, in this case we denote $B = A^{o}$.
\begin{Lemma}\label{bipolarAY}
Under the conditions of Proposition~\ref{dualCharacterization}, we have
\begin{enumerate}[(i)]
\item The sets $\mathcal A(1)$ and $\mathcal Y(1)$ 
are convex, solid, and closed subsets of $\mathbb L^0$.
\item $\mathcal A(1)$ and $\mathcal Y(1)$ satisfy the bipolar relations
\begin{displaymath}
\begin{array}{rclcl}
c\in\mathcal A(1)&{\Leftrightarrow}&\mathbb E\left[((cY)\cdot\kappa)_T \right]\leq 1,&{\rm for~every}&Y\in\mathcal Y(1),\\
Y\in\mathcal Y(1)&{\Leftrightarrow}&\mathbb E\left[((cY)\cdot\kappa)_T \right]\leq 1,&{\rm for~every}&Y\in\mathcal A(1).\\
\end{array}
\end{displaymath}
\item Both 
$\mathcal A(1)$ and $\mathcal Y(1)$ contain strictly positive elements.
\end{enumerate}
\end{Lemma}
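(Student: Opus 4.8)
The plan is to deduce everything from Proposition~\ref{dualCharacterization} together with the bipolar theorem of Brannath--Schachermayer (or, more conveniently here, its $L^0(d\kappa\times\P)$ version due to \v{Z}itkovi\'c). First I would address part (i). For $\cA(1)$: convexity is immediate from linearity of $H\mapsto H\cdot S$ and of $c\mapsto c\cdot\kappa$ and the fact that a convex combination of two $1$-admissible generalized strategies is again $1$-admissible (one has to check that the approximating sequences can be combined, but this is routine). Solidity is clear: if $c\in\cA(1)$ is dominated by an admissible strategy $H$ and $0\le c'\le c$ pointwise $(d\kappa\times\P)$-a.e., then the same $H$ works for $c'$. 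Closedness in $\mathbb L^0$ is the one point that genuinely uses Proposition~\ref{dualCharacterization}: the characterization \eqref{dualChar} exhibits $\cA(1)$ as $\{c\ge 0:\ \E[((cZ)\cdot\kappa)_T]\le 1\ \text{for all }Z\in\cZ\}$, and for each fixed $Z\in\cZ$ the map $c\mapsto \E[((cZ)\cdot\kappa)_T]=\E[Z_T(c\cdot\kappa_T)]$ is lower semicontinuous for convergence in $(d\kappa\times\P)$-measure (pass to an a.e.-convergent subsequence and apply Fatou), so $\cA(1)$ is an intersection of closed sets, hence closed. For $\cY(1)$: it is closed by definition (it is a $\mathbb L^0$-closure), and it is convex and solid because the generating set $\{Y:\ 0\le Y\le yZ\ (d\kappa\times\P)\text{-a.e. for some }Z\in\cZ\}$ with $y=1$ is convex (as $\cZ$ is convex) and solid, and these properties pass to the closure.

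Next, part (ii). By Proposition~\ref{dualCharacterization}, $c\in\cA(1)$ iff $\sup_{Z\in\cZ}\E[((cZ)\cdot\kappa)_T]\le 1$. Since every $Y\in\cY(1)$ is dominated $(d\kappa\times\P)$-a.e. by some $Z\in\cZ$ (and this domination is preserved under $\mathbb L^0$-limits along an a.e.-convergent subsequence, using Fatou once more), we get $\E[((cY)\cdot\kappa)_T]\le\E[((cZ)\cdot\kappa)_T]\le 1$ for all $Y\in\cY(1)$; conversely, $\cZ\subseteq\cY(1)$ (take $y=1$), so testing against all $Y\in\cY(1)$ is at least as strong as testing against all $Z\in\cZ$. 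This proves the first equivalence: $c\in\cA(1)\iff \E[((cY)\cdot\kappa)_T]\le 1\ \forall\,Y\in\cY(1)$, i.e.\ $\cA(1)=\cY(1)^{o}$. For the second equivalence I would invoke the bipolar theorem: since $\cY(1)$ is, by part (i), a convex solid closed subset of $\mathbb L^0_+$ containing a strictly positive element (part (iii), proved below), it equals its own bipolar, $\cY(1)=\cY(1)^{oo}=\cA(1)^{o}$, which is exactly the stated reverse implication. (There is an evident typo in the second displayed equivalence of the statement: the last ``$Y\in\cA(1)$'' should read ``$c\in\cA(1)$''; I would state it as $Y\in\cY(1)\iff\E[((cY)\cdot\kappa)_T]\le 1$ for every $c\in\cA(1)$.)

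Finally, part (iii): $\cA(1)$ contains the strictly positive constant $c^{*}_t=1/A$ noted right after the definition of $\cA(x)$ (with $x=1$), so it has a strictly positive element. For $\cY(1)$, pick any $Z\in\cZ$, which is nonempty by \eqref{ZisNotEmpty} and strictly positive by definition of $\cZ$; then $Z$ itself (being $(d\kappa\times\P)$-a.e.\ equal to $1\cdot Z$) lies in $\cY(1)$ and is strictly positive $(d\kappa\times\P)$-a.e.

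The main obstacle is the closedness of $\cA(1)$ in $\mathbb L^0$ and, relatedly, the care needed in checking that the domination $Y\le Z$ and the admissibility of $H$ are stable under the various limiting operations; once Proposition~\ref{dualCharacterization} is in hand, closedness of $\cA(1)$ reduces to the lower semicontinuity of $c\mapsto\E[Z_T(c\cdot\kappa_T)]$ noted above, and the rest is a routine application of the $L^0$ bipolar theorem. I would remark that all three parts together are precisely the hypotheses needed to apply the abstract duality theorems of~\cite{Mostovyi2011}, which is the purpose of this lemma.
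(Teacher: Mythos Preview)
Your approach matches the paper's, which likewise derives (i)--(iii) from Proposition~\ref{dualCharacterization} and then defers to the argument of Proposition~4.4 in~\cite{Mostovyi2011} (which proceeds via the $\mathbb L^0$ bipolar theorem exactly as you outline).

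One imprecision worth flagging in part (ii): it is not true that every $Y\in\cY(1)$ is dominated $(d\kappa\times\P)$-a.e.\ by some $Z\in\cZ$, since $\cY(1)$ is a closure and the dominating elements $Z^n$ for an approximating sequence $Y^n$ need not converge. The correct statement is that $\E[((cY)\cdot\kappa)_T]\le 1$ holds for $Y$ in the generating set (where $Y\le Z$ for some $Z\in\cZ$) and then passes to the closure by Fatou along an a.e.-convergent subsequence; equivalently, $\cY(1)=\cZ^{oo}$ and hence $\cY(1)^{o}=\cZ^{o}=\cA(1)$. This is clearly what you intend by your parenthetical remark, but the sentence as written (and in particular the chain $\E[((cY)\cdot\kappa)_T]\le\E[((cZ)\cdot\kappa)_T]$ for an arbitrary $Y\in\cY(1)$) asserts something false.
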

\begin{proof}
 
 Assertions of item $(iii)$ follow from conditions (\ref{stochasticClock}) and (\ref{ZisNotEmpty}) respectively.
Now in view of Proposition~\ref{dualCharacterization}, the proof of the remaining items goes along the lines of the proof of Proposition~4.4 in~\cite{Mostovyi2011}. It is therefore omitted~here.
\end{proof}

\begin{Lemma}\label{propertiesOfZ}
Under the conditions of Proposition~\ref{dualCharacterization}, we have
\begin{enumerate}[(i)]
\item  
$\sup\limits_{Z\in\mathcal Z} \mathbb E\left[((cZ)\cdot\kappa)_T \right] 
= \sup\limits_{Y\in\mathcal Y(1)} \mathbb E\left[((cY)\cdot\kappa)_T \right]$ 
for every $c\in\mathcal A(1)$,
\item the set $\mathcal Z$ is closed under
 the countable convex combinations, i.e. for every sequence  $(Z^m)$ in $\mathcal Z$ and a sequence of positive numbers $(a^m)$ 
such that $\sum\limits_{m\geq 1}a^m = 1$, the process $Z \triangleq \sum\limits_{m\geq 1}a^mZ^m$ belongs to $\mathcal Z$.
\end{enumerate}
\end{Lemma}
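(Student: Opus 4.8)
The plan is to establish the two assertions separately, with item $(ii)$ essentially reducing to standard facts about densities of martingale measures and item $(i)$ following from a sandwiching argument combined with the bipolar relations of Lemma~\ref{bipolarAY}.

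For item $(ii)$, fix a sequence $(Z^m)$ in $\mathcal Z$ and positive weights $(a^m)$ summing to $1$, and set $Z = \sum_{m\geq 1} a^m Z^m$. First I would check that $Z$ is a well-defined strictly positive c\`adl\`ag martingale with $Z_0 = 1$: strict positivity and the normalization are immediate from those of each $Z^m$; that $Z$ is a martingale follows from monotone convergence applied to the partial sums (each partial sum $\sum_{m\leq M} a^m Z^m$ is a nonnegative martingale, and $\E[Z_t] = 1$ for all $t$, so the convergence is also in $L^1$, giving the martingale property of the limit, with a c\`adl\`ag modification available since the filtration satisfies the usual conditions). The substantive point is that $(1 + H\cdot S)Z$ is a local martingale for every $1$-admissible $H\in\mathcal H^n$ and every $n$. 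Here I would argue that each $(1+H\cdot S)Z^m$ is a nonnegative local martingale, hence a supermartingale; the partial sums $\sum_{m\leq M} a^m (1+H\cdot S)Z^m$ are therefore nonnegative supermartingales converging monotonically (using $1+H\cdot S\geq 0$ from $1$-admissibility and Lemma~\ref{SholiPositive}) to $(1+H\cdot S)Z$, which is thus a nonnegative supermartingale with constant expectation $\E[(1+H\cdot S_t)Z_t] = \lim_M \sum_{m\leq M} a^m \E[(1+H\cdot S_t)Z^m_t]$; since each term equals $1$ (the bond part) plus a nonpositive quantity, and in fact for $H\cdot S$ a true local martingale under the finite-dimensional measures one gets the expectation is exactly $1$, so constancy of the expectation of the supermartingale upgrades it to a martingale, hence a (local) martingale. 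Thus $Z\in\mathcal Z^n$ for every $n$, i.e. $Z\in\mathcal Z$. I expect the only delicate bookkeeping here to be the justification that the expectation of $(1+H\cdot S)Z$ is genuinely constant rather than merely bounded; one can sidestep this by localizing each $Z^m X$ along a common reducing sequence or by invoking that a nonnegative supermartingale which is a countable convex combination of local martingales is itself a local martingale.

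For item $(i)$, fix $c\in\mathcal A(1)$. The inequality ``$\leq$'' is trivial since $\mathcal Z \subseteq \mathcal Y(1)$ up to the closure operation — more precisely, every $Z\in\mathcal Z$ satisfies $0\leq Z\leq 1\cdot Z$ so $Z\in\mathcal Y(1)$, hence the supremum over $\mathcal Z$ is dominated by the supremum over $\mathcal Y(1)$. For the reverse inequality, take any $Y\in\mathcal Y(1)$. By definition of $\mathcal Y(1)$ as a closure, there is a sequence $Y^k$ with $0\leq Y^k \leq Z^k$ $(d\kappa\times\P)$-a.e. for some $Z^k\in\mathcal Z$, and $Y^k\to Y$ in $\mathbb L^0$. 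Passing to a subsequence, $Y^k\to Y$ $(d\kappa\times\P)$-a.e., and by Fatou's lemma $\E[((cY)\cdot\kappa)_T] \leq \liminf_k \E[((cY^k)\cdot\kappa)_T] \leq \liminf_k \E[((cZ^k)\cdot\kappa)_T] \leq \sup_{Z\in\mathcal Z}\E[((cZ)\cdot\kappa)_T]$, where the middle inequality uses $Y^k\leq Z^k$ and $c\geq 0$. Taking the supremum over $Y\in\mathcal Y(1)$ yields ``$\geq$'', completing the proof. The main obstacle is the interchange of limits and the monotone-convergence/localization argument in item $(ii)$; once that is handled, item $(i)$ is a routine Fatou argument.
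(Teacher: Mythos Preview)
Your argument for item $(i)$ is correct and is exactly what the paper does: the inequality $\leq$ comes from $\mathcal Z\subseteq\mathcal Y(1)$, and the reverse inequality follows from Fatou's lemma together with the definition of $\mathcal Y(1)$ as a closure of processes dominated by elements of $\mathcal Z$.

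For item $(ii)$, however, your primary line of reasoning does not close. You want to conclude that $(1+H\cdot S)Z$ is a martingale by arguing that it is a nonnegative supermartingale with constant expectation, and you write that $\E[(1+H\cdot S_t)Z^m_t]=1$. But each $(1+H\cdot S)Z^m$ is only known to be a nonnegative \emph{local} martingale (hence a supermartingale), so in general $\E[(1+H\cdot S_t)Z^m_t]\leq 1$ with strict inequality possible. Thus the ``constant expectation'' step is unjustified, and the upgrade from supermartingale to martingale fails as written. Your second alternative, ``a nonnegative supermartingale which is a countable convex combination of local martingales is itself a local martingale'', is not a standard fact and would itself require proof.

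The fix you mention in passing --- a common reducing sequence --- is in fact the paper's entire argument and should be the main point, not a sidestep. The paper observes that for $X\triangleq x+H\cdot S\geq 0$ with $H\in\mathcal H^n$, the stopping times
\[
\tau^k\triangleq\inf\{t>0:\ X_t>k\}\wedge T,\qquad k\geq 1,
\]
form a localizing sequence for $XZ$ \emph{simultaneously for every} $Z\in\mathcal Z$, because the sequence depends only on $X$ and not on $Z$. Once this is established, $(XZ^m)^{\tau^k}$ is a true martingale for each $m$, and monotone convergence (using $X\geq 0$) gives that $(XZ)^{\tau^k}=\sum_m a^m (XZ^m)^{\tau^k}$ is a martingale; hence $XZ$ is a local martingale and $Z\in\mathcal Z$. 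Make this the backbone of your proof of $(ii)$.
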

\begin{proof}
  For every $n\geq 1$, and $H\in\mathcal H^n$, in view of the positivity of $X\triangleq x + H^n\cdot S$ (for an appropriate $x\geq 0$),
\begin{displaymath}
\tau^k \triangleq \inf\left\{ t>0: ~X_t > k\right\}\wedge T, \quad k\geq 1,
\end{displaymath}
is a localizing sequence for $XZ$  {\it for every} $Z\in\mathcal Z$. 
This implies $(ii)$, 
whereas $(i)$ results  from Fatou's lemma and the definitions of the sets $\mathcal Z$ and $\mathcal Y(1)$.

\end{proof}
\begin{proof}[Proof of Theorem~\ref{mainTheorem}]
 By Lemma~\ref{bipolarAY}, the sets $\mathcal A(1)$ and $\mathcal Y(1)$ satisfy the assumptions of 
 Theorem 3.2 in~\cite{Mostovyi2011} that implies the assertions $(i)$ and $(ii)$ of Theorem~\ref{mainTheorem}. The 
 conclusions of item $(iii)$ supervene from Lemma~\ref{propertiesOfZ} and Theorem 3.3 in~\cite{Mostovyi2011}. This completes the proof of Theorem~\ref{mainTheorem}.
 
 \end{proof}
For the proof of Lemma~\ref{secondMain}, we need the following technical result.
\begin{Lemma}\label{lemma2-7-1}
Under the conditions of Lemma~\ref{secondMain}, for every $\varepsilon\in(0,1)$~we~have
\begin{displaymath}
\bigcap\limits_{n\geq 1}\mathcal Y^n(1) \subset \mathcal Y\left(\frac{1}{1-\varepsilon}\right).
\end{displaymath}
\end{Lemma}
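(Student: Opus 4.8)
The plan is to prove the inclusion by duality, comparing the polars of the relevant consumption sets. Note first the scaling identities $\mathcal{A}(x) = x\,\mathcal{A}(1)$ and $\mathcal{Y}(y) = y\,\mathcal{Y}(1)$, which are immediate from the definitions. By Lemma~\ref{bipolarAY}(ii), $Y\in\mathcal{Y}(1)$ if and only if $\mathbb{E}\left[((cY)\cdot\kappa)_T\right]\le 1$ for every $c\in\mathcal{A}(1)$; after rescaling, proving $Y\in\mathcal{Y}\left(\tfrac{1}{1-\varepsilon}\right)$ is therefore equivalent to proving $\mathbb{E}\left[((cY)\cdot\kappa)_T\right]\le 1$ for every $c\in\mathcal{A}(1-\varepsilon)$.

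The first step is the finite-dimensional version of one direction of the bipolar relation: for each $n\ge 1$, every $Y\in\mathcal{Y}^n(1)$, and every $c\in\mathcal{A}^n(1)$, one has $\mathbb{E}\left[((cY)\cdot\kappa)_T\right]\le 1$. This follows from the reasoning in the proof of Proposition~\ref{dualCharacterization}, specialized to the $n$-th market: if $c\cdot\kappa_T\le 1+H\cdot S_T$ with $H\in\mathcal{H}^n$ (say $H$ is $x_0$-admissible), then $Z(x_0+H\cdot S)$ is a nonnegative local martingale, hence a supermartingale, for every $Z\in\mathcal{Z}^n$, after rescaling $H$ to a $1$-admissible integrand as in the proof of Lemma~\ref{propertiesOfZ}; consequently $\mathbb{E}\left[Z_T(1+H\cdot S_T)\right]\le 1$. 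Combining this with $\mathbb{E}\left[Z_T(c\cdot\kappa_T)\right]=\mathbb{E}\left[((cZ)\cdot\kappa)_T\right]$ (Theorem I.4.49 in~\cite{Jahod-Shiryaev}, using monotonicity of $c\cdot\kappa$) and with $Y\le Z$ $(d\kappa\times\mathbb{P})$-a.e.\ yields the bound for every $Y$ in the generating family of $\mathcal{Y}^n(1)$, and Fatou's lemma along a $(d\kappa\times\mathbb{P})$-a.e.\ convergent subsequence extends it to the $\mathbb{L}^0$-closure $\mathcal{Y}^n(1)$. (Alternatively, this is part of the finite-dimensional statements of~\cite{KS, Mostovyi2011}.)

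Now fix $\varepsilon\in(0,1)$ and $Y\in\bigcap_{n\ge 1}\mathcal{Y}^n(1)$. By the previous step, $\mathbb{E}\left[((cY)\cdot\kappa)_T\right]\le 1$ for every $c\in\bigcup_{n\ge 1}\mathcal{A}^n(1)$; since $((cY)\cdot\kappa)_T\ge 0$, Fatou's lemma applied along $\mathbb{L}^0$-convergent (hence, up to subsequences, $(d\kappa\times\mathbb{P})$-a.e.\ convergent) sequences propagates this bound to every $c\in{\rm cl}\left(\bigcup_{n\ge 1}\mathcal{A}^n(1)\right)$. By the standing assumption~(\ref{closureA}), $\mathcal{A}(1-\varepsilon)$ is contained in this closure, so $\mathbb{E}\left[((cY)\cdot\kappa)_T\right]\le 1$ for all $c\in\mathcal{A}(1-\varepsilon)$, equivalently $\mathbb{E}\left[((c(1-\varepsilon)Y)\cdot\kappa)_T\right]\le 1$ for all $c\in\mathcal{A}(1)$. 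By Lemma~\ref{bipolarAY}(ii) this gives $(1-\varepsilon)Y\in\mathcal{Y}(1)$, i.e.\ $Y\in\mathcal{Y}\left(\tfrac{1}{1-\varepsilon}\right)$, which is the assertion.

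The main obstacle is the first step: one must deal with integrands in $\mathcal{H}^n$ that are $x_0$-admissible for some $x_0$ possibly exceeding $1$ (handled by rescaling to $1$-admissibility, as in Lemma~\ref{propertiesOfZ}), correctly interchange the clock integral with the expectation, and pass from the generating family of $\mathcal{Y}^n(1)$ to its $\mathbb{L}^0$-closure by Fatou; the remaining steps are bookkeeping with the scaling identities and with assumption~(\ref{closureA}).
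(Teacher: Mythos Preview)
Your proof is correct and follows essentially the same route as the paper's: both arguments hinge on the polar relations $\mathcal{Y}^n(1)=\mathcal{A}^n(1)^{o}$ and $\mathcal{Y}(1)=\mathcal{A}(1)^{o}$, the inclusion~(\ref{closureA}), and Fatou's lemma to pass to the $\mathbb{L}^0$-closure. The only cosmetic difference is that the paper invokes Proposition~4.4 of~\cite{Mostovyi2011} to get the finite-dimensional bipolar $\mathcal{A}^n(1)^{o}=\mathcal{Y}^n(1)$ directly and then writes the chain $\mathcal{Y}\bigl(\tfrac{1}{1-\varepsilon}\bigr)=\mathcal{A}(1-\varepsilon)^{o}\supset\bigl(\bigcup_n\mathcal{A}^n(1)\bigr)^{o}=\bigcap_n\mathcal{A}^n(1)^{o}=\bigcap_n\mathcal{Y}^n(1)$ in one line, whereas you re-derive the needed direction $\mathcal{Y}^n(1)\subset\mathcal{A}^n(1)^{o}$ by hand and work element-by-element.
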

\begin{proof}
Observe that by Proposition  4.4 in~\cite{Mostovyi2011}, for every $n\geq 1$, the sets $\mathcal A^n(1)$ and $\mathcal Y^n(1)$ satisfy the
bipolar relations, likewise by Lemma~\ref{bipolarAY}, we have $\mathcal A(1)^{o} = \mathcal Y(1)$.
Fix an $\varepsilon\in(0,1)$.
From (\ref{closureA})
using Fatou's lemma we~obtain
$$\mathcal A(1-\varepsilon)^{o} \supset \left(\bigcup\limits_{n\geq 1}\mathcal A^n(1)\right)^{o}.$$ Therefore we conclude
\begin{displaymath}
\mathcal Y\left(\frac{1}{1-\varepsilon}\right) = \mathcal A(1-\varepsilon)^{o} 
\supset  \left(\bigcup\limits_{n\geq 1}\mathcal A^n(1)\right)^{o}  = 
\bigcap\limits_{n\geq 1}\mathcal A^n(1)^{o}  = \bigcap\limits_{n\geq 1} \mathcal Y^n(1).
\end{displaymath}
This concludes the proof of the lemma.
\end{proof}
\begin{proof}[Proof of Lemma~\ref{secondMain}]
Without loss of generality, we will assume that $u^1(x)>-\infty,~x>0$.
We will only show the second assertion, as the proof of the first one is entirely similar.
 Also, for convenience of notations, we will assume that $y=1$. 
 Let $Z^n$ be a minimizer to the dual problem (\ref{dualProblemN}), $n\geq 1$, where
 the existence of the solutions to~(\ref{dualProblemN}) follows from Theorem 2.3 in~\cite{Mostovyi2011}. 
 
 It follows from (\ref{stochasticClock}) that the set $\mathcal Z^1$ {is} 
 bounded in $\mathbb L^1\left(d\kappa\times \mathbb P\right)$. 
 This in particular implies that
 $\mathcal Y^1(1)$ is bounded in $\mathbb L^0\left(d\kappa\times \mathbb P\right)$. Therefore, by Lemma A1.1 in~\cite{DS}, 
 there exists a sequence $\widetilde Z^n\in{\rm conv}\left( Z^n, Z^{n+1}, \dots\right)$, $n\geq 1$,  and an
 element $Z\in\mathbb L^0\left(d\kappa\times \mathbb P\right)$, such 
 that $(\widetilde Z^n)$ converges to $Z$ $\left(d\kappa\times \mathbb P\right)$-a.e. We also have
 $$Z = \lim\limits_{n\to\infty}\widetilde Z^n \in\bigcap\limits_{n\geq 1}\mathcal Y^n(1) \subset \mathcal Y\left(\frac{1}{1-\varepsilon}\right)\quad {\rm for~ every}\quad \varepsilon\in(0,1),$$
where the latter inclusion follows from Lemma~\ref{lemma2-7-1}.
 By convexity of $V$, we~get
 \begin{equation}\label{1-10-1}
  \limsup\limits_{n\to\infty} \mathbb E\left[V(\widetilde Z^n)\cdot\kappa_T \right] \leq 
  \lim\limits_{n\to\infty}v^n(1).
 \end{equation}
Note that $(\widetilde Z^n)\subset \mathcal Y^1(1)$. Consequently, using Lemma 3.5 in~\cite{Mostovyi2011}, we conclude that 
$\left(V^{-}\left(\widetilde Z^n \right)\right)$ in uniformly integrable 
(here $V^{-}$ denotes the negative part of the stochastic field $V$).
Therefore, from Fatou's lemma and (\ref{1-10-1})~we~deduce 
\begin{displaymath}
 v\left(\frac{1}{1-\varepsilon}\right) \leq \mathbb E\left[V(Z)\cdot\kappa_T \right] \leq \liminf\limits_{n\to\infty} \mathbb E\left[V(\widetilde Z^n)\cdot\kappa_T \right] \leq 
  \lim\limits_{n\to\infty}v^n(1)
\end{displaymath}
for every $\varepsilon \in(0,1)$. Taking the limit as $\varepsilon\downarrow 0$ and using the continuity of $v$ (by convexity, see Theorem~\ref{mainTheorem}), we obtain that
$$v(1) \leq \lim\limits_{n\to\infty}v^n(1).$$
Also, since $\mathcal Y(1) \subseteq \mathcal Y^n(1)$ for every $n\geq 1$, we have
$$v(1) \geq \lim\limits_{n\to\infty}v^n(1).$$
Thus, $v(1) = \lim\limits_{n\to\infty}v^n(1)$. The proof of the lemma is now complete.
\end{proof}

\begin{Lemma}\label{3-16-1}
Let $S$ be a continuous process (i.e. every component of $S$ is continuous) that satisfy (\ref{ZisNotEmpty}). Then, under (\ref{stochasticClock}), (\ref{closureA}) holds.
\end{Lemma}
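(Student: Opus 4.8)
The goal is to verify the abstract inclusion \eqref{closureA} in the special case when $S$ is componentwise continuous. The plan is to fix $\varepsilon\in(0,1]$ and an arbitrary $c\in\mathcal A(1-\varepsilon)$, and to produce, for that $c$, a sequence $(c^k)$ with $c^k\in\bigcup_{n\geq 1}\mathcal A^n(1)$ converging to $c$ in $\mathbb L^0$. By Proposition~\ref{dualCharacterization}, membership $c\in\mathcal A(1-\varepsilon)$ is equivalent to the dual bound $\sup_{Z\in\mathcal Z}\mathbb E[((cZ)\cdot\kappa)_T]\leq 1-\varepsilon$; the key structural fact I want to exploit is that when $S$ is continuous, an $x$-admissible \emph{generalized} strategy can be replaced, up to a small loss of capital, by a genuine $n$-elementary strategy for $n$ large. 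Concretely, by definition $c\in\mathcal A(1-\varepsilon)$ comes with a $(1-\varepsilon)$-admissible generalized strategy $H$ and an approximating sequence $(H^m)$ of $(1-\varepsilon)$-admissible elementary strategies with $H^m\cdot S\to H\cdot S$ in the semimartingale topology; each $H^m$ is $n_m$-elementary for some $n_m$.

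First I would pass from convergence in the semimartingale topology to convergence \emph{uniformly in time in probability} of $H^m\cdot S$ to $H\cdot S$ along a subsequence — here continuity of $S$ is the point, since the limiting process $H\cdot S$ is then continuous and the semimartingale topology controls the sup over $[0,T]$ of the difference in probability. Having this, for each $m$ the process $X^m\triangleq (1-\varepsilon)+H^m\cdot S$ stays, with probability tending to $1$, uniformly close to $X\triangleq (1-\varepsilon)+H\cdot S\geq c\cdot\kappa\geq 0$; but I cannot simply assert $X^m\geq 0$. The standard fix is to absorb the shortfall: for $m$ large the negative part of $X^m$ is, off an event of small probability, bounded by $\varepsilon/2$ say, so $X^m+\varepsilon/2$ is $\geq 0$ with high probability. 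Then one truncates on a stopping time $\sigma^m=\inf\{t: X^m_t<-\varepsilon/2\}\wedge T$ and works with the stopped strategy $H^m\mathbf 1_{[\![0,\sigma^m]\!]}$, which is a genuine $1$-admissible $n_m$-elementary strategy because $(1-\varepsilon)+\varepsilon/2<1$ and because $\mathbb P[\sigma^m<T]\to0$. Setting $c^m$ to be $c$ truncated so that $c^m\cdot\kappa_T\leq 1+H^m\mathbf 1_{[\![0,\sigma^m]\!]}\cdot S_T$ holds pathwise, one gets $c^m\in\mathcal A^{n_m}(1)$, and $c^m\to c$ in $\mathbb L^0$ since $\mathbb P[\sigma^m<T]\to0$ and the modification only affects $c$ on that vanishing event together with the uniformly small capital cushion $\varepsilon/2$, which can be further sent to $0$ by a diagonal argument over a sequence $\varepsilon_j\downarrow 0$.

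The main obstacle, and the step that needs genuine care, is the second one: turning the semimartingale-topology approximation of $H\cdot S$ by elementary integrals into an approximation that respects the \emph{admissibility} (lower-bound) constraint with only an $\varepsilon$-sized degradation of the constant. Without continuity of $S$ the stopped elementary integral $H^m\mathbf 1_{[\![0,\sigma^m]\!]}\cdot S$ could jump below $-1$ at $\sigma^m$, destroying $1$-admissibility; continuity guarantees $X^m$ does not overshoot at the stopping time, so the stopped process remains bounded below by $-1$. I would also need to quote Lemma~3.4 of~\cite{DeDonnoGuasoniPratelli} (which handles exactly the terminal-wealth, $\kappa=1_{T}$ case) as the template: the argument here is the same with $c\cdot\kappa_T$ playing the role of the terminal payoff and the final truncation of $c$ ensuring the pathwise budget inequality required by the definition of $\mathcal A^{n_m}(1)$. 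Modulo these continuity-driven estimates the conclusion \eqref{closureA}, hence the second item of Remark~\ref{keyRemark}, follows. $\hfill\square$
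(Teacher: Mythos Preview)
Your plan is essentially the paper's, and you correctly isolate the one place where continuity of $S$ is used: preventing overshoot at a stopping time so that the stopped elementary integral stays $1$-admissible. There is, however, a concrete slip in the implementation. By the very definition of a $(1-\varepsilon)$-admissible generalized strategy, the approximating elementary strategies $H^m$ are themselves $(1-\varepsilon)$-admissible, so $X^m=(1-\varepsilon)+H^m\cdot S\geq 0$ \emph{always}; your stopping time $\sigma^m=\inf\{t:X^m_t<-\varepsilon/2\}$ is therefore identically $T$ and does no work. The genuine obstacle is not that $X^m$ may go negative but that $c\cdot\kappa$ may exceed $1+H^m\cdot S$. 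The paper stops at exactly that event:
\[
\tau_m\triangleq\inf\{t\in[0,T]:c\cdot\kappa_t>1+H^m\cdot S_t\}\wedge(T+1),
\]
sets $c^m\triangleq c\,1_{[0,\tau_m)}$, and uses continuity of $S$ (hence of $H^m\cdot S$) to guarantee $c^m\cdot\kappa\leq 1+H^m\cdot S$ on $[0,\tau_m]$ with no overshoot, so that $H^m1_{[0,\tau_m]}$ is a $1$-admissible elementary strategy and $c^m\in\mathcal A^{n_m}(1)$. Since convergence in the semimartingale topology already implies ucp convergence (continuity of $S$ is \emph{not} needed for that step, contrary to what you suggest), one gets
\[
\mathbb P[\tau_m\leq T]\leq\mathbb P\Bigl[\sup_{t\in[0,T]}(H\cdot S_t-H^m\cdot S_t)\geq\varepsilon\Bigr]\to 0,
\]
and hence $c^m\to c$ in $\mathbb L^0$.

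Two smaller remarks: Proposition~\ref{dualCharacterization} plays no role here (one works directly from the definition of $\mathcal A(1-\varepsilon)$), and there is no diagonal argument over $\varepsilon_j\downarrow 0$ --- the $\varepsilon$ in \eqref{closureA} is fixed and is precisely the cushion that absorbs the discrepancy between $H\cdot S$ and $H^m\cdot S$.
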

\begin{proof}
Fix an $\varepsilon\in(0,1]$ and $c\in\mathcal A(1-\varepsilon)$. Let $H$ be a $(1-\varepsilon)$-admissible generalized strategy, such that
\begin{displaymath}
c\cdot \kappa\leq 1 - \varepsilon + H\cdot S,\quad \Pas
\end{displaymath}
Let $(H^n)$ be a sequence of $(1-\varepsilon)$-admissible elementary strategies, such that
$H^n\cdot S$ converges to $H\cdot S$ in the semimartingale topology. Let us define a sequence of stopping times as
\begin{displaymath}
\tau_n \triangleq \inf\left\{t\in[0,T]:~c\cdot\kappa_t{ > }
 1 + H^n\cdot S_t\right\}\wedge {(T+1)}.
\end{displaymath}
Then we have
\begin{displaymath}
\begin{array}{rcl}
\mathbb P[\tau_n {\leq} T] &\leq& \mathbb P\left[\sup\limits_{t\in[0,T]}(c\cdot\kappa_t - 1 +\varepsilon - H^n\cdot S_t) \geq \varepsilon\right] \\
&\leq &  \mathbb P\left[\sup\limits_{t\in[0,T]}(H\cdot S_t - H^n\cdot S_t) \geq \varepsilon\right],\\
\end{array}
\end{displaymath}
which converges to $0$ as $n\to\infty$. 
Let us define a sequence of consumptions $(c^n)$ as follows
\begin{displaymath}
c^n_t \triangleq c_t1_{[0, \tau_n{)}}(t),\quad t\in[0,T], \quad n\geq 1.  
\end{displaymath}
Then, by continuity of $S$ we get
\begin{displaymath}
c^n\cdot\kappa \leq 1 + H^n\cdot S\quad{\rm on~[0,\tau_n]}\quad \Pas,\quad n\geq 1.
\end{displaymath}
Since $H^n1_{[0,\tau_n]}$ is a $1$-admissible elementary strategy, we deduce that 
 $c^n\in \mathcal A^n(1)$, $n\geq 1$. One can also see that $(c^n)$ converges to $c$ in $\mathbb L^0$.  
This concludes the proof of the lemma. 
\end{proof}

\bibliographystyle{plainnat} \bibliography{finance}
\end{document}